\renewcommand{\baselinestretch}{1.5}
\def\diag{{\rm diag}\,}
\def\Exp{{\mathbb{E}}\,}
\def\tr{{\rm tr}\,}
\def\diag{{\rm diag}\,}
\def\real{{\rm Re}\,}
\def\be{\begin{equation}}
\def\ee{\end{equation}}
\def\ba{\left[\begin{array}}
\def\ea{\end{array}\right]}
\def\bea{\begin{eqnarray}}
\def\eea{\end{eqnarray}}
\newcommand{\mb}[1]{\mathbf{#1}}
\newcommand{\mc}[1]{\mathcal{#1}}
\newcommand{\ol}[1]{\overline{#1}}
\def\ba{{\bf a}}
\def\br{{\bf r}}
\def\bv{{\bf v}}
\newcommand{\subscript}[1]{\ensuremath{_{\textrm{#1}}}}
\def\IC-Relay-TDMA{{Concurrent}\subscript{S$\rightarrow$R}-{IC}\subscript{{R}}}
\def\DSTC-ICRec{{Concurrent}\subscript{S$\rightarrow$R$\rightarrow$D}-{IC}\subscript{{D}}}
\def\TDMA-ICRec{{Concurrent}\subscript{R$\rightarrow$D}-{IC}\subscript{{D}}}
\def\full-TDMA-DSTC{{TDMA}\subscript{S$\rightarrow$R$\rightarrow$D}}
\def\joint{{Concurrent}\subscript{S$\rightarrow$R$\rightarrow$D}}
\def\TDMADFICRec{{Concurrent}\subscript{R$\rightarrow$D}-{D}\subscript{{R}}-{IC}\subscript{{D}}}
\newtheorem{theorem}{\textbf{Theorem}}
\newtheorem{corollary}{\textbf{Corollary}}
\newtheorem{lemma}{\textbf{Lemma}}
\begin{document}
\title{Multi-Source Transmission for Wireless Relay
Networks with Linear Complexity
}
%
\date{}
\author[1]{Liangbin Li}
\author[2]{Yindi Jing}
\author[1]{Hamid Jafarkhani\thanks{Part of this work was presented at IEEE
International Conference on Communications (ICC) 2009.}}
\affil[1]{Center for Pervasive Communications \& Computing,
University of California, Irvine} \affil[2]{University of Alberta}
%
\maketitle
\renewcommand{\baselinestretch}{1.4}
\begin{abstract}
This paper considers transmission schemes in multi-access relay
networks (MARNs) where $J$ single-antenna sources send independent
information to one $N$-antenna destination through one $M$-antenna
relay. For complexity considerations, we propose a linear framework,
where the relay linearly transforms its received signals to generate
the forwarded signals without decoding and the destination uses its
multi-antennas to fully decouple signals from different sources
before decoding, by which the decoding complexity is linear in the
number of sources.
To achieve a high symbol rate, we first propose a scheme called
\DSTC-ICRec in which all sources' information streams are
concurrently transmitted in both the source-relay link and the
relay-destination link. In this scheme, distributed space-time
coding (DSTC) is applied at the relay, which satisfies the linear
constraint. DSTC also allows the destination to conduct the
zero-forcing interference cancellation (IC) scheme originally
proposed for multi-antenna systems to fully decouple signals from
different sources. Our analysis shows that the symbol rate of
\DSTC-ICRec is $1/2$ symbols/source/channel use and the diversity
gain of the scheme is upperbounded
by $M-J+1$. 
To achieve a higher diversity gain, we propose another scheme called
\TDMA-ICRec in which the sources time-share the source-relay link.
The relay coherently combines the signals on its antennas to
maximize the signal-to-noise ratio (SNR) of each source, then
concurrently forwards all sources' information. The destination
performs zero-forcing IC. It is shown through both analysis and
simulation that when {\small$N \ge 2J-1$}, \TDMA-ICRec achieves the
same maximum diversity gain as the full TDMA scheme in which the
information stream from each source is assigned to an orthogonal
channel in both links, but with a higher symbol rate.
\end{abstract}
{\bf\em Index Terms:} Multi-access relay network, distributed
space-time coding, interference cancellation, orthogonal and
quasi-orthogonal designs, cooperative diversity.
\renewcommand{\baselinestretch}{1.6}
\section{Introduction}
Node cooperation improves the reliability and the capacity of
wireless networks. Recently, many cooperative schemes have been
proposed, and their multiplexing and diversity gains are analyzed
\cite{LenemanWornell,DSTC-paper,zz-eg-sc,LaTsWo}. However, most
pioneer works in this area focus on cooperative relay designs
without multi-user interference. It is assumed that there is a
single transmission at a time or orthogonal channels are assigned to
different transmissions, e.g.
\cite{LenemanWornell,DSTC-paper,zz-eg-sc,LaTsWo}. As a general
network has multiple nodes each of which can be a data source,
allocating an orthogonal channel to the information stream of each
source is bandwidth inefficient. Therefore, concurrent transmission
of information streams from multiple sources is desirable in
cooperative networks to improve spectrum efficiency. Some examples
on the design and performance analysis of multi-source transmission
can be found in \cite{L06,O08,K09,Yilmaz-ICC}.

One model on multi-source transmission is the interference relay
network\cite{MoBoNa05}. Multiple parallel communication flows are
supported by a common set of cooperative relays through two hops of
transmission. Each source targets at one distinct destination. Two
schemes using relays to resolve interference were discussed. The
zero-forcing (ZF) relaying scheme designs scalar gain factors at
single-antenna relays to null out interference at undesired
destinations\cite{WitRan04, Wit06, Niu07}. The minimum mean square
error (MMSE) relaying scheme designs scalar gain factors to minimize
interference-plus-noise power at undesired
destinations\cite{BerWit05, Keyi-ICCASP}. However, both relaying
schemes assume that the gain factors are first calculated at one
centralized node having perfect and global channel state information
(CSI), then fed back to the relays. While papers \cite{WitRan04,
Wit06, Niu07, BerWit05, Keyi-ICCASP} discuss the multiplexing gain
and designs of the optimal scalar gain factors, they do not provide
diversity analysis. An interference relay network with multi-antenna
nodes was discussed in \cite{Oyman07}, in which the authors used
maximum-ratio-combining (MRC), ZF, and MMSE relaying schemes and
analyzed the power-bandwidth trade-off of the network.

Another model on multi-source cooperative communication considers
the scenario where several sources target at one multi-antenna
destination with the help of one multi-antenna relay. The network is
called \emph{multi-access relay network}
(MARN)\cite{ICRelay-TDMA-jou}. We use the notation $1_J\times
M_1\times N_1$ to represent the MARN with $J$ single-antenna
sources, one $M$-antenna relay, and one $N$-antenna destination. For
the MARN, the source-relay link is a multi-access channel (MAC) and
the relay-destination link is a point-to-point multiple input
multiple output (MIMO) channel. The MARN is thus essentially a
serial concatenation of the MAC and the MIMO. Both links have the
potential for \emph{multi-source concurrent transmission}, i.e.,
information streams from different sources can be simultaneously
transmitted on the same channel. An intuitive scheme is to allow
information streams from different sources concurrently transmitted
in both links and jointly decode all sources' information at the
relay and the destination. Single source transmission schemes, e.g.,
distributed space time code (DSTC), can be applied straightforwardly
following this idea by treating signals from different sources
jointly as a higher dimension signal vector. It can be shown that
this scheme achieves a symbol rate of 1/2 symbols/source/channel use
and the maximum diversity gain of $M$. However, with such a scheme,
the decoding complexities at the relay and the destination are
exponential in the number of sources, thus may become infeasible for
networks with a large number of sources. For complexity
considerations, we propose a \emph{linear framework} for MARNs. The
relay linearly transforms its received signals to generate the
forwarded signals without decoding. The destination separates
signals from different sources before the ML decoding of each
source's information. The decoding complexity at the destination is
hence linear in the number of sources. To the best of our knowledge,
MARNs with this linear framework have not been explicitly discussed
in the literature. It is noteworthy that this linear framework may
constrain the network optimality in some performance measures.

For single-source two-hop cooperative networks, DSTC can achieve the
maximum diversity gain without any CSI at the relay\cite{DSTC-OD}.
For the multi-source scenario, one can use DSTC at the relay and
assign the information stream of each source an orthogonal channel
in both links. This scheme is denoted as \full-TDMA-DSTC, whose
achievable diversity gain is $M$ for $1_J\times M_1\times N_1$
MARNs\cite{DSTC-mulpaper}. Since interference is avoided in both
links, we call this maximum diversity gain the
\emph{interference-free} (int-free) diversity gain. It provides a
natural upperbound on the diversity gain for any multi-source
transmission scheme that allows concurrent transmission of
information streams from different sources. However, \full-TDMA-DSTC
has low spectrum efficiency when the number of sources is large. In
\cite{ICRelay-TDMA-jou}, we proposed a multi-source transmission
scheme called IC-Relay-TDMA, in which concurrent multi-source
transmission is allowed in the source-relay link only. The relay,
knowing the source-relay channel, performs linear interference
cancellation (IC)\cite{NaSeCa,AlCa,KaJa} to decouple signals from
different sources. In the relay-destination link, the relay forwards
information of different sources to the destination using TDMA. To
adopt the same naming system, this scheme is denoted as
\IC-Relay-TDMA instead in this paper. For a $L_{J}\times M_1\times
N_1$ MARN, \IC-Relay-TDMA achieves the maximum int-free diversity
when $N\le L\left(1-\frac{J-1}{M}\right)$\cite{ICRelay-TDMA-jou}.
For the MARN considered in this paper, i.e., each source has only
one antenna, \IC-Relay-TDMA only achieves a diversity gain of
$M-J+1$, hence cannot achieve the maximum int-free diversity gain.
Also, the TDMA method in the relay-destination link limits the
symbol rate of the network.

The \IC-Relay-TDMA scheme uses the relay to remove interference from
different sources. For the considered MARN, the multi-antenna
destination also has the capability of IC. In this paper, we propose
two schemes in which IC is conducted at the destination rather than
the relay. This is desirable for networks with powerful destinations
such as the uplink of cellular systems. The first protocol allows
information streams from different sources simultaneously
transmitted in both links. The relay conducts DSTC to linearly
transform its received signals without decoding. The destination
performs IC to separate signals from different sources. Hence, this
protocol is called \emph{\DSTC-ICRec}. For the second protocol, the
sources time-share the source-relay link. The relay obtains
soft-estimates of the symbols from each source by MRC, encodes
soft-estimates of each source by one DSTC, then concurrently
forwards all sources' DSTCs. The destination performs IC to decouple
signals from different sources. Since information streams of
different sources are simultaneously transmitted in the
relay-destination link only, we call this protocol
\emph{\TDMA-ICRec}. A brief comparison of the proposed protocols
with \full-TDMA-DSTC and \IC-Relay-TDMA in symbol rate, diversity
gain, and CSI requirements is illustrated in Table \ref{table-comp}.
Contributions of the proposed protocols are summarized as follows.
\begin{enumerate}
  \item The proposed protocols fit the linear framework: linear processing without decoding at the relay and linear decoding complexity in the number of sources at the destination. Furthermore, they are applicable to the interference relay network.
  \item CSI feedback, which is necessary for ZF and MMSE relaying schemes\cite{WitRan04,Wit06,BerWit05, Keyi-ICCASP}, is not required for
  the
  protocols proposed in this paper.
  \item We perform rigorous analysis on the diversity gain of the proposed protocols, which to the best of our knowledge, is not provided for related work on multi-source
  cooperative networks.
  \item \DSTC-ICRec achieves a symbol rate of $1/2$ symbols/source/channel use, the highest among the linear schemes in Table \ref{table-comp}. Since the symbol rate of each source is independent of the number of sources, the throughput of the network grows linearly with the number of sources without increasing the bandwidth. With rigorous analysis, the diversity gain is shown to be upperbounded by $M-J+1$.
  \item \TDMA-ICRec achieves a symbol rate of $\frac{1}{J+1}$ symbols/source/channel use in conjunction with a diversity gain of $\min\{M, \lfloor\frac{M}{J}\rfloor(N-J+1)\}$~($\lfloor x\rfloor$ denotes the maximum
integer not greater than $x$). When $N\ge 2J-1$, \TDMA-ICRec
achieves the maximum int-free diversity gain. Compared with
\full-TDMA-DSTC, it has a higher symbol rate with no penalty on the
diversity gain for networks satisfying $N\ge 2J-1$. Compared with
\IC-Relay-TDMA\cite{ICRelay-TDMA-jou}, it has the same symbol rate
but has advantage in the diversity gain for the $1_J\times M_1\times
N_1$ MARN.
\end{enumerate}

The rest of the paper is organized as follows. Section
\ref{sec-model} introduces the network model. Section
\ref{sec-DSTC-ICRec} presents \DSTC-ICRec and analyzes its diversity
gain. In Section \ref{sec-TDMA-ICRec}, \TDMA-ICRec is proposed and
its performance is studied. Section \ref{sec-Simulation} provides
the numerical results. Conclusions are given in Section
\ref{sec-Conclusion}. Involved proofs are presented in appendices.

Notation: For a matrix $\mb{A}$, denote its $(i,j)$th entry as
$a_{ij}$. $\mb{A}^t$, $\mb{A}^*$, and $\overline{\mb{A}}$ are the
transpose, Hermitian, and conjugate of $\mb{A}$, respectively.
$\|\mb{A}\|$ is the Frobenius norm of $\mb{A}$. For two matrices
$\mb{A}$ and $\mb{B}$ of the same dimension, $\mb{A}\succ\mb{B}$
means that $\mb{A}-\mb{B}$ is positive definite. $\mb{I}_n$ is the
$n\times n$ identity matrix. $\mb{0}_{mn}$ is the $m\times n$ matrix
of all zeros. When $m=n$, $\mb{0}_{nn}$ is simplified as $\mb{0}_n$.
$f(x)=o(x)$ means $\underset{x\rightarrow
0^+}{\lim}\frac{f(x)}{x}=0$. $\Exp [x]$ denotes the expected value
of the random variable $x$.

\section{Network Model}
\label{sec-model} Consider a MARN with $J$ single-antenna sources,
one $M$-antenna relay, and one $N$-antenna destination, where there
is no direct connection from the sources to the destination. This
MARN is denoted as a $1_J\times M_1\times N_1$ MARN. We further
assume that both the numbers of relay antennas and destination
antennas are no less than the number of sources, i.e., $J\le
\min\{M, N\}$. This condition is to guarantee full IC at the
destination, the details of which will be shown later. The condition
can be realized by user admission control in the upper-layers. We
assume that both the relay and the destination know the value of
$J$.

Denote the channel coefficient from Source $j$ $(j=1,\ldots, J)$ to
the $i$-th $(i=1,\ldots, M)$ relay antenna as $f_{i}^{(j)}$, and the
channel coefficient from the $i$-th relay antenna to the $n$-th
$(n=1,\ldots, N)$ destination antenna as $g_{in}$. Assume that all
channel coefficients are i.i.d.~circularly symmetric $\mc{CN}(0,1)$
distributed. In addition, we assume a block-fading model with
coherent interval $T$. The noises at each relay antenna and
destination antenna are modeled as additive white Gaussian noise
(AWGN) with zero mean and unit power.
Throughout the paper, we assume global and perfect CSI at the
destination. The CSI requirement at the relay depends on the scheme.
In Section \ref{sec-DSTC-ICRec}, the proposed protocol does not need
any CSI at the relay; in Section \ref{sec-TDMA-ICRec}, the relay
needs only backward CSI, i.e., the channel information from all
sources to itself. The required backward CSI can be acquired by
training\cite{DSTC-mulpaper, SunJing2010}. No CSI feedback is
required for either protocol. To focus on the diversity gain
performance, we assume that all sources and the relay have the same
average power constraint. Further, all nodes are assumed to be
perfectly synchronized at the symbol level.

\section{The Protocol of \DSTC-ICRec}
\label{sec-DSTC-ICRec}

In this section, we propose a protocol that allows concurrent
transmission of information streams from different sources in both
the source-relay link and the relay-destination link to achieve the
symbol rate of $1/2$ symbols/source/channel use. The protocol is
thus called \DSTC-ICRec. Based on the linear framework introduced in
Section 1, we need to design the linear signal processing at the
relay and the destination. Since DSTC requires only a linear
transformation at the relay and achieves the maximum diversity gain
in single-source relay networks\cite{DSTC-paper,DSTC-OD}, we propose
to use DSTC for the MARN to gain protection against channel fading.
At the destination, the IC method\cite{NaSeCa,AlCa,KaJa}, originally
proposed for multi-user MAC to decouple interfering
signals\cite{KaJa-2}, is used to separate information of different
sources. In Subsection \ref{subsec-protocol}, we describe the
details of the protocol. Subsection \ref{subsec-divana} provides the
diversity gain analysis. Subsection \ref{subsec-dis} contains the
discussion on the condition for full IC at the destination and the
symbol rate.

\subsection{Protocol Description}\label{subsec-protocol}

We first describe \DSTC-ICRec in the $1_2\times 2_1\times N_1$ MARN
with two single-antenna sources, one double-antenna relay, and one
$N$-antenna destination; then consider the $1_2\times 4_1\times N_1$
MARN followed by its generalization to $1_J\times M_1\times N_1$
MARNs.

\subsubsection{\DSTC-ICRec for the $1_2\times 2_1\times N_1$
MARN}\label{subsec-motivate} The protocol of \DSTC-ICRec consists of
two steps as shown in Fig.~\ref{fig-DSTCICRec-block}. During the
first step, each source collects two symbols $s_1^{(j)}$ and
$s_2^{(j)}$ independently and uniformly from the constellation
$\mc{S}$. Source $j$ transmits a vector of two symbols
$\mb{x}^{(j)}=\left[s_1^{(j)}\ s_2^{(j)}\right]^t$ and both sources
transmit concurrently. The received signal vector at the $i$-th
relay antenna can be expressed as
\be\label{eq-transmission1}\br_{i}=\sqrt{P}\mb{x}^{(1)}f_{i}^{(1)}+\sqrt{P}\mb{x}^{(2)}f_{i}^{(2)}+\bv_{i},\ee
where $\mb{v}_{i}$ denotes the $2\times 1$ AWGN vector at the $i$-th
relay antenna. The relay uses Alamouti DSTC\cite{DSTC-OD} to
generate its output signal vector at the $i$-th antenna as,
\begin{eqnarray}\label{eq-DSTC}
\mb{t}_i&=&\sqrt{\frac{P}{4P+2}}\left(\mb{A}_i\mb{r}_i+\mb{B}_i\ol{\mb{r}_i}\right),\
i=1, 2,
\end{eqnarray}
where $\sqrt{\frac{P}{4P+2}}$ normalizes the average power at the
relay to $P$ and $\mb{A}_i$ and $\mb{B}_i$ are the $2\times 2$
encoding matrices based on Alamouti design \cite{hj}:
\be\label{eq-Alamouti}\mb{A}_1=\mb{I}_{2}, \mb{B}_2=\left[\begin{array}{rr}0&-1\\
1&0\end{array}\right], \mb{A}_2=\mb{B}_1=\mb{0}_2. \ee From
\eqref{eq-DSTC}, the output vector $\mb{t}_i$ is a linear
transformation of the input vector $\mb{r}_i$, i.e., the relay
signal processing is a linear transformation. Since this linear
transformation is independent of the channels, the relay does not
need any CSI.

During the second step, the relay transmits $\mb{t}_i$ from its
$i$-th antenna, and $\mb{t}_1$ and $\mb{t}_2$ are concurrently
transmitted. Denote the sampled signal at the $n$-th antenna of the
destination and time slot $\tau$ as $x_{\tau n}$. Using the special
structure of the Alamouti design, an equivalent system can be
obtained as
{\small\setlength{\arraycolsep}{0pt}\begin{eqnarray}\label{eq-syseq}
\underset{\tilde{\mb{x}}_n}{\underbrace{\left[\begin{array}{c}x_{1n}\\
\ol{x_{2n}}\end{array}\right]}}
=\sqrt{\frac{P^2}{4P+2}}\sum_{j=1:2}\underset{\mb{H}^{(j)}_n}{\underbrace{\left[\begin{array}{cc}f_{1}^{(j)}g_{1n} &-\ol{f_{2}^{(j)}}g_{2n}\\ f_{2}^{(j)}\ol{g_{2n}}\ &\ol{f_{1}^{(j)}g_{1n}}\end{array}\right]}}\left[\begin{array}{c}s_1^{(j)}\\
\ol{s_2^{(j)}}\end{array}\right]+\underset{\mb{u}_n}{\underbrace{\sqrt{\frac{P}{4P+2}}\left(\left[\begin{array}{c}g_{1n}v_{11}\\
\ol{g_{1n}}v_{21}\end{array}\right]+\left[\begin{array}{c}-g_{2n}\ol{v_{22}}\\
\ol{g_{2n}v_{12}}\end{array}\right]\right)+\left[\begin{array}{c}w_{1n}\\
\ol{w_{2n}}\end{array}\right]}},\end{eqnarray}} \hspace{-4pt}where
$w_{\tau n}$ denotes the AWGN at the $n$-th destination antenna and
time slot $\tau$ and $\mb{u}_n$ denotes the equivalent noise vector
at the $n$-th antenna of the destination. The $2\times 2$ equivalent
channel matrix $\mb{H}_n^{(j)}$ for Source $j$ has Alamouti
structure, i.e.,
{\small$\mb{H}^{(j)*}_n\mb{H}^{(j)}_n=\left(\left|f_1^{(j)}g_{1n}\right|^2+\left|f_2^{(j)}g_{2n}\right|^2\right)\mb{I}_2$}.

Note that the equivalent system equation in \eqref{eq-syseq} is
similar to that of a MAC with two double-antenna users except that
the noise vector is correlated. Using the IC techniques proposed for
MAC in \cite{AlCa}, the destination can fully decouple the
information streams from different sources and separately decode the
information of each source. Without loss of generality, we discuss
how the destination decodes the information of Source 1. To cancel
the symbols of Source 2, the destination calculates
{\small$\hat{\mb{x}}_n=\frac{2\mb{H}_n^{(2)*}}{\|\mb{H}_n^{(2)}\|^2}\tilde{\mb{x}}_n-\frac{2\mb{H}_N^{(2)*}}{\|\mb{H}_N^{(2)}\|^2}\tilde{\mb{x}}_N$}
for $n=1, \ldots, N-1$. Define
$\tilde{\mb{x}}=[\tilde{\mb{x}}_1^*,\ldots,\tilde{\mb{x}}_N^*]^*$,
which is a $2N\times 1$ vector, and
$\mc{X}=[\hat{\mb{x}}_1^*,\ldots,\hat{\mb{x}}_{N-1}^*]^*$, which is
a $(2N-2)\times 1$ vector. The IC process can be represented in a
matrix form as
\begin{eqnarray}\label{eq-remaining}
\mc{X}=\mb{B}\tilde{\mb{x}}=\sqrt{\frac{P^2}{4P+2}}\mb{B}\mb{H}_1\left[\begin{array}{c} s_{1}^{(1)} \\
\overline{s_{2}^{(1)}}\end{array}\right]+\underset{\mb{n}}{\underbrace{\mb{B}\mb{u}}},
\end{eqnarray}
where the $(2N-2) \times 2N$ matrix $\mb{B}$ is the IC matrix, the
$2N\times 2$ matrix $\mb{H}_1$ denotes the equivalent channel matrix
for Source 1, and the $(2N-2)\times 1$ vector $\mb{n}$ denotes the
remaining equivalent noise vector. $\mb{B}$, $\mb{H}_1$, and
$\mb{u}$ are given as
{\small\setlength{\arraycolsep}{2pt}\begin{equation}\label{eq-ICmatrix}
\mb{B}=\left[\begin{array}{ccccc}\frac{2\mb{H}^{(2)*}_{1}}{\|\mb{H}^{(2)}_{1}\|^2}&\mb{0}_2&\cdots&\mb{0}_2&-\frac{2\mb{H}^{(2)*}_{N}}{\|\mb{H}^{(2)}_{N}\|^2}\\
\mb{0}_2&\frac{2\mb{H}_{2}^{(2)*}}{\|\mb{H}_{2}^{(2)}\|^2}&\cdots&\mb{0}_2&-\frac{2\mb{H}_{N}^{(2)*}}{\|\mb{H}_{N}^{(2)}\|^2}\\
\vdots&\vdots&\ddots&\vdots&\vdots\\
\mb{0}_2&\cdots&\cdots&\frac{2\mb{H}_{N-1}^{(2)*}}{\|\mb{H}_{N-1}^{(2)}\|^2}&-\frac{2\mb{H}_{N}^{(2)*}}{\|\mb{H}_{N}^{(2)}\|^2}\end{array}\right],
\mb{H}_1=\left[\begin{array}{c}\mb{H}^{(1)}_{1}\\\mb{H}^{(1)}_{2}\\\vdots\\
\mb{H}^{(1)}_{N}\end{array}\right],\
\mb{u}=\left[\begin{array}{c}\mb{u}_1\\\mb{u}_2\\\vdots\\\mb{u}_{N}\end{array}\right].
\end{equation}}

It can be shown that the equivalent noise vector $\mb{n}$ is
Gaussian but not white. With straightforward calculation, the
$(2N-2)\times (2N-2)$ covariance matrix of $\mb{n}$ can be obtained
as
\be\label{eq-noisecov}\mb{R}_\mb{n}=\frac{P}{4P+2}\mb{B}\tilde{\mb{G}}\tilde{\mb{G}}^*\mb{B}^*+\mb{B}\mb{B}^*,\ee
where \setlength{\arraycolsep}{1pt}{\be\label{eq-G}\tilde{\mb{G}}=
\left[\tilde{\mb{G}}_1^t\ \cdots\ \tilde{\mb{G}}_N^t\right]^t,\
\tilde{\mb{G}}_n=
\left[\begin{array}{cccc}g_{1n}&0&g_{2n}&0\\0&\ol{g_{1n}}&0&
\ol{g_{2n}}\end{array}\right].\ee} Based on \eqref{eq-remaining},
Source 1's information can be recovered using the maximum-likelihood
(ML) decoding rule \be\label{eq-DSTCICRec-ML}
\arg\underset{s_1^{(1)},
s_2^{(1)}}{\min}\left(\mc{X}-\sqrt{\frac{P^2}{4P+2}}\mb{B}\mb{H}_1\left[\begin{array}{c}s_1^{(1)}\\\ol{s_2^{(1)}}\end{array}\right]\right)^*\mb{R}_\mb{n}^{-1}\left(\mc{X}-\sqrt{\frac{P^2}{4P+2}}\mb{B}\mb{H}_1\left[\begin{array}{c}s_1^{(1)}\\\ol{s_2^{(1)}}\end{array}\right]\right).
\ee

Next, we show that the ML decoding in \eqref{eq-DSTCICRec-ML} can be
decoupled into two symbol-wise ML decodings. It suffices to show
that $\mb{H}_1^*\mb{B}^*\mb{R}_\mb{n}^{-1}\mb{B}\mb{H}_1$ is a
diagonal matrix. Notice that Alamouti structure\cite{hj} is closed
under matrix addition, matrix multiplication, and scalar
multiplication. Since the $2\times 2$ submatrices in $\mb{B}$,
$\mb{H}_1$, and $\tilde{\mb{G}}$ have Alamouti structure from
\eqref{eq-ICmatrix} and \eqref{eq-G}, the matrix
$\mb{H}_1^*\mb{B}^*\mb{R}_\mb{n}^{-1}\mb{B}\mb{H}_1$ also has
Alamouti structure in addition to being Hermitian. Generally, it can
be shown that any Hermitian Alamouti matrix is diagonal with equal
diagonal entries. Therefore,
$\mb{H}_1^*\mb{B}^*\mb{R}_\mb{n}^{-1}\mb{B}\mb{H}_1$ is diagonal.
The ML decoding in \eqref{eq-DSTCICRec-ML} can be decomposed to two
procedures of symbol-wise decoding as
\begin{eqnarray*} \arg\underset{s_1^{(1)}}{\max}\
2 \real
\left(\mb{h}_1^{*}\mb{B}^*\mb{R}_\mb{n}^{-1}\mc{X}s_1^{(1)}\right)-\sqrt{\frac{P^2}{4P+2}}\mb{h}_1^{*}\mb{B}^*\mb{R}_\mb{n}^{-1}\mb{B}\mb{h}_1\left|s_1^{(1)}\right|^2,\\
 \arg\underset{s_2^{(1)}}{\max}\
2 \real
\left(\mb{h}_2^{*}\mb{B}^*\mb{R}_\mb{n}^{-1}\mc{X}\ol{s_2^{(1)}}\right)-\sqrt{\frac{P^2}{4P+2}}\mb{h}_2^{*}\mb{B}^*\mb{R}_\mb{n}^{-1}\mb{B}\mb{h}_2\left|s_2^{(1)}\right|^2,
\end{eqnarray*}
where $\mb{h}_i$ denotes the $i$-th column of $\mb{H}_1$. Similarly,
the destination can cancel the symbols of Source 1 and decode the
information of Source 2. Four procedures of symbol-wise ML decoding
are needed in total to decode both sources' information.

\subsubsection{\DSTC-ICRec for the $1_J\times 4_1\times N_1$ MARN}
This subsection describes \DSTC-ICRec in the MARN with four relay
antennas and $J$ sources. During the first step, Source $j$
transmits a $4\times 1$ vector consisting of four symbols,
i.e.,$\left[s_1^{(j)}\ s_2^{(j)} \ s_3^{(j)}\ s_4^{(j)} \right]^t$,
and all sources transmit concurrently. The $i$-th relay antenna
receives a $4\times 1$ vector $\mb{r}_i$. The relay performs DSTC
with quasi-orthogonal design\cite{DSTC-OD}. The $4\times 1$
forwarded vector $\mb{t}_i$ is generated as
$\mb{t}_i=c\left(\mb{A}_i\mb{r}_i+\mb{B}_i\ol{\mb{r}_i}\right)$,
where $c=\sqrt{\frac{P}{4(JP+1)}}$ is to constrain the power of the
relay to $P$; $\mb{A}_i$ and $\mb{B}_i$ are DSTC encoding matrices
with quasi-orthogonal design\cite{DSTC-OD, hj}:
{\small\setlength{\arraycolsep}{1.5pt}\begin{eqnarray}
\mb{A}_1&=&\mb{I}_4,\
\mb{A}_4=\left[\begin{array}{rrrr}0&0&0&1\\0&0&-1&0\\0&-1&0&0\\1&0&0&0\end{array}\right],\
\mb{B}_2=\left[\begin{array}{rrrr}0&-1&0&0\\1&0&0&0\\0&0&0&-1\\0&0&1&0\end{array}\right],\
\mb{B}_3=\left[\begin{array}{rrrr}0&0&-1&0\\0&0&0&-1\\1&0&0&0\\0&1&0&0\end{array}\right],\
\mb{A}_2=\mb{A}_3=\mb{B}_1=\mb{B}_4=\mb{0}_4.\label{eq-quasides}
\end{eqnarray}}
During the second step, the relay concurrently forwards $\mb{t}_i$.
The received signal at the $n$-th destination antenna can be written
as \be \nonumber \left[\begin{array}{c}x_{1n}\\
x_{2n}\\ x_{3n} \\ x_{4n}\end{array}\right]=\sqrt{P}c \sum_{j=1:J}
\underset{\mb{S}^{(j)}}{\underbrace{\left[\begin{array}{rrrr}s_1^{(j)}&
-\ol{s_2^{(j)}}& -\ol{s_3^{(j)}}& s_4^{(j)}\\ s_2^{(j)}&
\ol{s_1^{(j)}}& \ol{s_4^{(j)}}& -s_3^{(j)}\\ s_3^{(j)}&
-\ol{s_4^{(j)}}& \ol{s_1^{(j)}}& -s_2^{(j)}\\ s_4^{(j)}&
\ol{s_3^{(j)}}& \ol{s_2^{(j)}}& s_1^{(j)}
\end{array}\right]}} \left[\begin{array}{c} f_1^{(j)}g_{1n}\\
\ol{f_2^{(j)}}g_{2n}\\ \ol{f_3^{(j)}}g_{3n} \\
f_4^{(j)}g_{4n}\end{array}\right]+c\left[\begin{array}{rrrr}v_{11}&
-\ol{v_{22}}& -\ol{v_{33}}& v_{44}\\ v_{21}& \ol{v_{12}}&
-\ol{v_{43}}& v_{34}\\ v_{31}& -\ol{v_{42}}& -\ol{v_{13}}& -v_{24}\\
v_{41}& -\ol{v_{32}}& -\ol{v_{23}}& v_{14}
\end{array}\right] \left[\begin{array}{c} g_{1n}\\
g_{2n}\\ g_{3n} \\
g_{4n}\end{array}\right]+\left[\begin{array}{c} w_{1n}\\
w_{2n}\\ w_{3n} \\
w_{4n}\end{array}\right], \ee where $x_{\tau n}$ denotes the sampled
signal at the $n$-th destination antenna and time slot $\tau$. It
can be observed that $\mb{S}^{(j)}$ has quasi-orthogonal structure
due to the DSTC at the relay. Using the IC techniques in
\cite{KaJa}, we can break the system into two equivalent Alamouti
systems as {\small\setlength{\arraycolsep}{4pt}
\begin{eqnarray}\label{eq-alasys-p}
{\left[\begin{array}{c}x_{1n}+x_{4n}\\\ol{x_{2n}}-\ol{x_{3n}}\end{array}\right]}&=&\sqrt{P}c\sum_{j=1:J}{\left[\begin{array}{cc}f^{(j)}_{1}g_{1n}+f^{(j)}_{4}g_{4n}&\ol{f^{(j)}_{2}}g_{2n}-\ol{f^{(j)}_{3}}g_{3n}\\
f^{(j)}_{2}\ol{g_{2n}}-f^{(j)}_{3}\ol{g_{3n}}&-\ol{f^{(j)}_{1}}\ol{g_{1n}}-\ol{f^{(j)}_{4}}\ol{g_{4n}}\end{array}\right]}{\left[\begin{array}{c}s_{1}^{(j)}+s_{4}^{(j)}\\\ol{s_3^{(j)}}-\ol{s_2^{(j)}}\end{array}\right]}+\mb{u}_n^+\\
\label{eq-alasys-n}
{\left[\begin{array}{c}x_{1n}-x_{4n}\\\ol{x_{2n}}+\ol{x_{3n}}\end{array}\right]}&=&\sqrt{P}c\sum_{j=1:J}{\left[\begin{array}{cc}f^{(j)}_{1}g_{1n}-f^{(j)}_{4}g_{4n}&\ol{f^{(j)}_{2}}g_{2n}+\ol{f^{(j)}_{3}}g_{3n}\\
f^{(j)}_{2}\ol{g_{2n}}+f^{(j)}_{3}\ol{g_{3n}}&-\ol{f^{(j)}_{1}}\ol{g_{1n}}+\ol{f^{(j)}_{4}}\ol{g_{4n}}\end{array}\right]}{\left[\begin{array}{c}s_{1}^{(j)}-s_{4}^{(j)}\\-\ol{s_3^{(j)}}-\ol{s_2^{(j)}}\end{array}\right]}+\mb{u}_n^-,
\end{eqnarray}}
\hspace{-3pt}where $\mb{u}_n^+$ and $\mb{u}_n^-$ denote the
equivalent noise vector for each system. They have the following
expressions: {\small\setlength{\arraycolsep}{1pt}
\begin{eqnarray*}
\mb{u}_n^+&=&c\left(\left[\begin{array}{c}(v_{11}+v_{41})g_{1n}\\(\ol{v_{21}}-\ol{v_{31}})\ol{g_{1n}}\end{array}\right]+\left[\begin{array}{c}(-\ol{v_{22}}+\ol{v_{32}})g_{2n}\\(v_{12}+v_{42})\ol{g_{2n}}\end{array}\right]+\left[\begin{array}{c}(-\ol{v_{33}}+v_{23})g_{3n}\\(-v_{43}-\ol{v_{13}})\ol{g_{3n}}\end{array}\right]+\left[\begin{array}{c}(v_{44}+v_{14})g_{4n}\\(-\ol{v_{34}}+\ol{v_{24}})\ol{g_{4n}}\end{array}\right]\right)+\left[\begin{array}{c}w_{1n}+w_{4n}\\\ol{w_{2n}}-\ol{w_{3n}}\end{array}\right]\\
\mb{u}_n^-&=&c\left(\left[\begin{array}{c}(v_{11}-v_{41})g_{1n}\\(\ol{v_{21}}+\ol{v_{31}})\ol{g_{1n}}\end{array}\right]+\left[\begin{array}{c}(-\ol{v_{22}}-\ol{v_{32}})g_{2n}\\(v_{12}-v_{42})\ol{g_{2n}}\end{array}\right]+\left[\begin{array}{c}(-\ol{v_{33}}-v_{23})g_{3n}\\(-v_{43}+\ol{v_{13}})\ol{g_{3n}}\end{array}\right]+\left[\begin{array}{c}(v_{44}-v_{14})g_{4n}\\(-\ol{v_{34}}-\ol{v_{24}})\ol{g_{4n}}\end{array}\right]\right)+\left[\begin{array}{c}w_{1n}-w_{4n}\\\ol{w_{2n}}+\ol{w_{3n}}\end{array}\right].
\end{eqnarray*}}

The destination uses $J-1$ antennas to cancel the symbols of $J-1$
interfering sources by the multi-user IC technique in \cite{AlCa}
for each Alamouti system in \eqref{eq-alasys-p} and
\eqref{eq-alasys-n}, thus decouple information streams from
different sources. The destination then recovers information of each
source separately using the ML decoding. Since the detailed formulas
can be found in \cite{KaJa}, we do not repeat the IC procedure here.

\subsubsection{\DSTC-ICRec for $1_J\times M_1\times N_1$ MARNs}
To use the protocol in MARNs with a general $M$, each source
transmits a vector of $2^n$ symbols, with $2^n$ the minimum number
that is no less than $M$. The relay designs the DSTC using the first
$M$ columns of a $2^n\times 2^n$ quasi-orthogonal space-time block
code (STBC) with ABBA structure\cite{QOD,TBH00}. The destination
separates the system into $2^{n-1}$ Alamouti systems and decouples
the signals from different sources using the IC procedure in
\cite{KaJa, ICRelay-TDMA-jou}. Each source's information can be
decoded separately at the destination. The decoding complexity is
thus linear in the number of sources.

\subsection{Diversity Gain Analysis}\label{subsec-divana}
In this subsection, we analyze the diversity gain of \DSTC-ICRec.
Due to the concatenation of the channels, a direct diversity
analysis from the system equation \eqref{eq-remaining} is
challenging. Instead, we work on an equivalent representation for
the tractability of the analysis. The equivalent representation
captures the effect of the IC at the destination to the first step
of transmission. For \DSTC-ICRec, although the ZF IC procedure is
conducted at the destination, there is a virtual ZF at the relay and
a dimension reduction filtering at the destination. We first derive
this system representation in Subsection \ref{subsec-equivalent},
then prove an upperbound of the diversity gain on \DSTC-ICRec in
Subsection \ref{subsec-divupperbound}.

\subsubsection{An equivalent representation for \DSTC-ICRec in $1_J\times M_1\times N_1$
MARNs}\label{subsec-equivalent}  Since the network parameters and
the processing at the relay and the destination are statistically
equivalent for all sources, the diversity gains of all sources are
identical. Thus, we only focus on Source 1.

For the simplicity of the presentation, we first look at the system
equation of the $1_2\times 2_1\times N_1$ MARN in
\eqref{eq-remaining}. Notice that each entry in the channel matrix
$\mb{B}\mb{H}_1$ is a rational function of the channel coefficients
of both links. Then, the entries are neither independent nor
Gaussian. This complicates the diversity gain analysis. In the
following, we derive an equivalent system representation to decouple
the channel concatenation, which will help the diversity gain
analysis. The system equation in \eqref{eq-remaining} can be
rewritten as
\begin{eqnarray}\label{eq-remaining2}
\mc{X}=\sqrt{\frac{P^2}{4P+2}}\mb{B}\tilde{\mb{G}}\mb{F}^{(1)}\left[\begin{array}{c} s_{1}^{(1)} \\
\overline{s_{2}^{(1)}}\end{array}\right]+\mb{n},
\end{eqnarray}
where $\tilde{\mb{G}}$ is defined in \eqref{eq-G} and
{\small\setlength{\arraycolsep}{2pt}$\mb{F}^{(j)}\triangleq\left[\begin{array}{cccc}f_1^{(j)}&0&0&f_2^{(j)}\\0&\ol{f_1^{(j)}}&\ol{f_2^{(j)}}&0\end{array}\right]^t$}.
Note that the IC matrix $\mb{B}$ zero-forces the channels of Source
2, i.e., $\mb{B}\tilde{\mb{G}}\mb{F}^{(2)}=\mb{0}$. In other words,
$\mb{B\tilde{G}}$ nulls out $\mb{F}^{(2)}$. Then, the rows of
$\mb{B\tilde{G}}$ are in the null space of the column space of
$\mb{F}^{(2)}$. Therefore, the channel matrix in
\eqref{eq-remaining2} is invariant if $\mb{F}^{(1)}$ is first
projected onto the null space of $\mb{F}^{(2)}$, i.e.,
$\mb{B\tilde{G}F}^{(1)}=\mb{B\tilde{G}\Phi F}^{(1)}$, where
$\mb{\Phi}$ is the projection matrix to the null space of
$\mb{F}^{(2)}$, i.e.,
$\mb{\Phi}=\mb{I}_4-\frac{2\mb{F}^{(2)}\mb{F}^{(2)*}}{\tr
(\mb{F}^{(2)}\mb{F}^{(2)*})}$. Thus, \eqref{eq-remaining2} can be
rewritten as
\begin{eqnarray}\label{eq-remaining3}
\mc{X}=\sqrt{\frac{P^2}{4P+2}}\mb{B\tilde{G}\Phi F}^{(1)}\left[\begin{array}{c} s_{1}^{(1)} \\
\overline{s_{2}^{(1)}}\end{array}\right]+\mb{n}.
\end{eqnarray}
This new system equation can be interpreted as follows. Symbols
$s_1^{(1)}$ and $s_2^{(1)}$ are first transmitted through channel
$\mb{F}^{(1)}$ to the relay. Then, ZF operation $\mb{\Phi}$ is
conducted to null out the information of Source 2. After that,
signals are forwarded through channel $\tilde{\mb{G}}$ and the
destination applies a filter $\mb{B}$ to reduce the dimension of the
received signal vector from $2N\times 1$ to $2(N-1)\times 1$. The
virtual ZF at the relay and the dimension reduction at the
destination are due to the ZF IC at the destination. A diagram
illustrating this process is shown in Fig.~\ref{fig-eqsystem}.

For general $J$ and $M$, with the same argument, the ZF at the
destination induces a virtual ZF at the relay followed by a
dimension reduction at the destination. It can be shown that the
dimensions of $\mb{F}^{(1)}, \tilde{\mb{G}}, \mb{B}$ are
$2^{n+1}\times 2^n$, $2^nN\times 2^{n+1}$, and $2^n(N-J+1)\times
2^nN$, respectively, where $2^n$ is the minimum number no less than
$M$. The virtual ZF operation at the relay nulls out the information
from Sources $2$ to $J$. The dimension reduction filter $\mb{B}$
decreases the dimension of the received signal vector from
$2^nN\times 1$ to $2^n(N-J+1)\times 1$. Although the new system
representation looks more complicated than the original one, it
simplifies the diversity gain analysis.

\subsubsection{Diversity gain
upperbound}\label{subsec-divupperbound} Diversity gain is defined as
the negative of the asymptotic slope of the bit error rate (BER)
with respect to the average transmit SNR in the high SNR regime. In
\cite{divthm_report}, it is shown that for a communication system
represented by the equation $\mb{y}=\mb{h}s+\mb{n}$ where $\mb{h}$,
$s$, and $\mb{n}$ are the channel vector, scalar symbol, and noise
vector, respectively, diversity gain can be calculated using the
outage probability of the instantaneous normalized receive SNR
$\gamma$ as \be\label{eq-div} d=\lim_{\epsilon\rightarrow
0^+}\frac{\log P(\gamma<\epsilon)}{\log \epsilon}, \ee where the
instantaneous normalized receive SNR is defined as
$\gamma=\mb{h}^*\mb{R}_\mb{n}^{-1}\mb{h}$ and $\mb{R}_\mb{n}$ is the
covariance matrix of $\mb{n}$. This technique is usually easier than
the direct calculation based on the error rate, thus is used in this
paper to analyze the diversity gain of \DSTC-ICRec. Based on the
equivalent system equation in \eqref{eq-remaining3}, the following
theorem is proved.
\begin{theorem}\label{thm-DSTCICRec1}
In $1_J\times M_1\times N_1$ MARNs, the diversity gain of
\DSTC-ICRec is upperbounded by $M-J+1$.
\end{theorem}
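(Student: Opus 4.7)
My plan is to invoke the diversity characterization in \eqref{eq-div} and to upper bound the diversity by exhibiting an event of probability at least $c\,\epsilon^{M-J+1}$ that forces the normalized receive SNR $\gamma$ below $\epsilon$. I would work entirely with the equivalent representation in \eqref{eq-remaining3}, because the factored channel $\mb{B}\tilde{\mb{G}}\mb{\Phi}\mb{F}^{(1)}$ cleanly isolates the source-relay matrix $\mb{F}^{(1)}$, the virtual relay ZF projector $\mb{\Phi}$, the relay-destination matrix $\tilde{\mb{G}}$, and the dimension-reducing filter $\mb{B}$. By the symmetry of the protocol across sources and (due to the quasi-orthogonal decoupling) across symbols, it suffices to analyze a single column $\mb{h}$ of $\mb{B}\tilde{\mb{G}}\mb{\Phi}\mb{F}^{(1)}$ together with the noise covariance $\mb{R}_\mb{n}$ in \eqref{eq-noisecov}.

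\textbf{The outage event.} The natural candidate is that Source $1$'s source-relay channel is weak in the direction that survives the virtual ZF. I would first condition on $\mb{F}^{(2)},\ldots,\mb{F}^{(J)}$ (which fix $\mb{\Phi}$) and on $\tilde{\mb{G}}$, and restrict to a typical high-probability set on which $\|\mb{B}\tilde{\mb{G}}\|$ and $\lambda_{\min}(\mb{R}_\mb{n})^{-1}$ are $\Theta(1)$. On this set, I would analyze the conditional distribution of $\|\mb{\Phi}\mb{F}^{(1)}\|^2$: using the Alamouti/ABBA structure of each $\mb{F}^{(j)}$ and the independence of channel coefficients across sources, one expects that the span of $\mb{F}^{(2)},\ldots,\mb{F}^{(J)}$ carves out exactly $J-1$ of the $M$ complex degrees of freedom of $\mb{F}^{(1)}$ almost surely, so that $\|\mb{\Phi}\mb{F}^{(1)}\|^2$ is conditionally distributed as a scaled $\chi^2$ with $2(M-J+1)$ real degrees of freedom. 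Its CDF near zero scales as $\Theta(\epsilon^{M-J+1})$, and the bound $\gamma\le \tfrac{P^2}{4P+2}\,\|\mb{B}\tilde{\mb{G}}\|^2\,\|\mb{\Phi}\mb{F}^{(1)}\|^2 /\lambda_{\min}(\mb{R}_\mb{n})$ transports this smallness into $P(\gamma<\epsilon)\ge c\,\epsilon^{M-J+1}$, which via \eqref{eq-div} yields $d\le M-J+1$.

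\textbf{Main obstacle.} The delicate step is the rank-counting fact that the virtual ZF really eliminates exactly $J-1$ complex degrees of freedom from $\mb{F}^{(1)}$. Since each $\mb{F}^{(j)}$ is not generic but a structured Alamouti/ABBA embedding of $M$ i.i.d.~complex gains into a $2^{n+1}\times 2^n$ matrix, standard generic-position arguments do not apply directly; one must exploit the specific algebraic structure of the quasi-orthogonal design together with the mutual independence of $\{f_i^{(j)}\}$ across $j$ to check almost-sure independence of the interference subspace from Source $1$'s channel directions. Once this linear-algebra fact is secured, controlling the auxiliary quantities $\|\mb{B}\tilde{\mb{G}}\|$ and $\lambda_{\min}(\mb{R}_\mb{n})$ -- so that the measure of the atypical complement is $o(\epsilon^{M-J+1})$ and does not disturb the dominant scaling -- reduces to routine tail estimates for complex Gaussian vectors.
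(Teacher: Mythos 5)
Your overall strategy coincides with the paper's: both arguments start from the equivalent representation \eqref{eq-remaining3}, use the outage characterization \eqref{eq-div}, strip away the influence of $\mb{B}$, $\tilde{\mb{G}}$, and $\mb{R}_\mb{n}$ by bounds that hold uniformly in those quantities, and then reduce the problem to showing that what survives of Source 1's source--relay channel after the virtual ZF has only $M-J+1$ complex degrees of freedom, so that its CDF near zero is of order $\epsilon^{M-J+1}$. The paper handles the nuisance terms by operator inequalities rather than a typical set: $\mb{R}_\mb{n}\succ\mb{B}\mb{B}^*$ together with $\mb{B}^*(\mb{B}\mb{B}^*)^{-1}\mb{B}\prec\mb{I}_{2N}$ removes $\mb{B}$ entirely, and $\tilde{\mb{G}}^*\tilde{\mb{G}}\prec\tr(\tilde{\mb{G}}^*\tilde{\mb{G}})\,\mb{I}$ factors the bound into an independent Gamma-distributed variable $g$ times a quadratic form in $\mb{f}^{(1)}$, after which averaging $\epsilon/(2g)$ only requires $\Exp[1/g]<\infty$. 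Your conditioning argument accomplishes the same thing for a lower bound and is acceptable (for a lower bound you only need the typical set to have probability bounded away from zero; requiring the complement to be $o(\epsilon^{M-J+1})$ is unnecessary), and the factor $\tfrac{P^2}{4P+2}$ you carry does not belong in the normalized SNR but only perturbs the constant, not the $\epsilon$-exponent.

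The genuine gap is exactly the step you label the ``main obstacle'' and then assume: that projecting onto the null space of the structured matrices $\mb{F}^{(2)},\ldots,\mb{F}^{(J)}$ removes precisely $J-1$ complex dimensions from the relevant column of $\mb{F}^{(1)}$, leaving a conditionally Gamma-distributed quadratic form with $M-J+1$ complex degrees of freedom. You state this as an expectation (``one expects\ldots'', ``once this linear-algebra fact is secured''), but it is the only nontrivial content of the theorem, and it is what the paper's proof actually establishes. The resolution uses the structure you were worried about in your favor: by the (quasi-)orthogonal design, the column $\hat{\mb{f}}_1^{(1)}$ is orthogonal to every column of each interferer's $\mb{F}^{(j)}$ except the corresponding one, so $\mb{\Phi}\hat{\mb{f}}_1^{(1)}=\mb{\Xi}\hat{\mb{f}}_1^{(1)}$, where $\mb{\Xi}$ projects only onto the orthogonal complement of the corresponding interfering columns; identifying coordinates, $\|\mb{\Xi}\hat{\mb{f}}_1^{(1)}\|^2=\mb{f}^{(1)*}\mb{\Theta}\mb{f}^{(1)}$ with $\mb{\Theta}$ the $M\times M$ projector onto the orthogonal complement of ${\rm span}\{\mb{f}^{(2)},\ldots,\mb{f}^{(J)}\}$. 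Since the $\mb{f}^{(j)}$, $j\ge 2$, are $J-1\le M-1$ jointly continuous vectors independent of $\mb{f}^{(1)}$, $\mb{\Theta}$ has rank $M-J+1$ almost surely and is independent of $\mb{f}^{(1)}$, so the conditional law of $\mb{f}^{(1)*}\mb{\Theta}\mb{f}^{(1)}$ is Gamma with $M-J+1$ complex degrees of freedom and your outage lower bound goes through. Without this reduction your degrees-of-freedom count is a conjecture rather than a proof; with it, your argument is essentially the paper's.
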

\begin{proof}
See Appendix \ref{ap-thm1}.
\end{proof}

Theorem \ref{thm-DSTCICRec1} can be intuitively explained as
follows. Since the first step transmission is a MAC with an
$M$-antenna receiver and the virtual ZF operation at the relay
requires the use of $J-1$ antennas to null out the information of
$J-1$ sources, the diversity gain achievable after the virtual ZF is
no higher than $M-J+1$. The second step transmission and the
dimension reduction at the destination cannot improve the diversity
gain of the first step. Therefore, the protocol has at most a
diversity gain of $M-J+1$. When $J=2$ and $M=2$, the following
diversity result can be obtained as a special case of Theorem
\ref{thm-DSTCICRec1}.
\begin{corollary}\label{cor-DSTCICRec2}
In the $1_2\times 2_1\times N_1$ MARN, the diversity gain of
\DSTC-ICRec is upperbounded by 1.
\end{corollary}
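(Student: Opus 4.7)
The plan is to obtain Corollary~\ref{cor-DSTCICRec2} as a direct instantiation of Theorem~\ref{thm-DSTCICRec1}. Setting $J=2$ and $M=2$ in the upper bound $M-J+1$ gives exactly $1$, which is the claimed bound. Since the hypotheses of Theorem~\ref{thm-DSTCICRec1} only require $J\le \min\{M,N\}$ and $N\ge 2$ here (as the destination must be able to perform IC with at least one interfering source and still leave a non-trivial filter output), the $1_2\times 2_1\times N_1$ MARN lies in the regime covered by the theorem, and the substitution is legitimate.

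For a self-contained verification in this special case, I would re-examine the equivalent representation \eqref{eq-remaining3}. With $M=2$ we have $n=1$, so $\mb{F}^{(1)}, \mb{F}^{(2)}$ are $4\times 2$ matrices, $\tilde{\mb{G}}$ is $2N\times 4$, and $\mb{B}$ is $2(N-1)\times 2N$. The projection $\mb{\Phi}=\mb{I}_4-\frac{2\mb{F}^{(2)}\mb{F}^{(2)*}}{\tr(\mb{F}^{(2)}\mb{F}^{(2)*})}$ is (almost surely) a rank-$2$ projector, so $\mb{\Phi}\mb{F}^{(1)}$ is a $4\times 2$ matrix whose column space is generically $1$-dimensional when viewed through the Alamouti structure --- this is the ``single degree of freedom'' that remains at the relay after the virtual ZF has consumed one of the two receive antennas in nulling Source~2. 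The remainder of the argument in Theorem~\ref{thm-DSTCICRec1}, showing that the second-hop channel $\tilde{\mb{G}}$ and the dimension-reduction filter $\mb{B}$ cannot increase the diversity order beyond this first-hop bottleneck, then specializes straightforwardly to give $d\le 1$.

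There is no genuine obstacle here beyond confirming that the generic-rank assumptions used in the general proof hold in the boundary case $M=J=2$; in particular, that $\mb{F}^{(2)}$ is almost surely full column rank (so $\mb{\Phi}$ has the asserted rank) and that $\mb{B}$ in \eqref{eq-ICmatrix} is well-defined whenever the $\mb{H}^{(2)}_n$ are non-zero, which holds with probability one under the Rayleigh-fading assumption. Consequently, the corollary follows immediately and no new computation is required.
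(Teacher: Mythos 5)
Your proposal matches the paper's own route: the corollary is obtained exactly as you describe, by setting $J=2$, $M=2$ in Theorem \ref{thm-DSTCICRec1} (and indeed the appendix proof of that theorem treats the $1_2\times 2_1\times N_1$ case first, bounding $\gamma$ by $2g\,\mb{f}^{(1)*}\mb{\Theta}\mb{f}^{(1)}$ with $\mb{\Theta}$ a rank-one projector, so no separate argument is needed). One small caution on your optional verification: $\mb{\Phi}\mb{F}^{(1)}$ is generically rank $2$, not rank $1$; the single surviving degree of freedom concerns the effective channel $\mb{\Phi}\hat{\mb{f}}_1^{(1)}$ of the individual symbol, whose squared norm reduces to the Gamma-degree-one quantity $\mb{f}^{(1)*}\mb{\Theta}\mb{f}^{(1)}$ --- but this does not affect your main argument, which is simply the legitimate substitution into the theorem.
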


\subsection{Discussion}\label{subsec-dis}
In this subsection, we discuss the properties of \DSTC-ICRec,
including the condition on the network parameters for full IC, the
symbol rate, and its comparison with existing schemes. Finally, we
present its possible applications in the interference relay network.

First we consider the condition on the network parameters to achieve
full IC. From the IC procedure in \cite{KaJa}, at least $J$ receive
antennas are required to decouple signals from $J$ source. Thus,
\DSTC-ICRec requires the number of destination antennas to be no
less than the number of sources. In addition, a condition on the
number of relay antennas is required. 
To show this, we start with the example in the $1_2\times 2_1\times
N_1$ MARN. From \eqref{eq-remaining3}, the equivalent channel vector
experienced by $s_1^{(1)}$ in the source-relay link is the first
column of $\mb{F}^{(1)}$, i.e., $[f_1^{(1)}\ 0\ 0\ f_2^{(1)}]^t$,
which is a $4\times 1$ vector in a 2-dimension subspace. It is
discussed in Subsection \ref{subsec-equivalent} that the IC
operation at the destination creates a virtual ZF operation at the
relay. Then, the equivalent channel vector at the relay can be
projected onto the null space of the equivalent channel vector of at
most one interfering source. In other words, the virtual ZF
operation at the relay can null out interference from at most one
source and the network can allow at most two sources to transmit
simultaneously. In general $1_J\times M_1\times N_1$ MARNs, the
equivalent channel vector at the relay is a $2^{n+1}\times 1$ vector
in a $M$-dimension subspace. The virtual ZF operation at the relay
can null out at most $M-1$ information streams from interfering
sources. Thus, the number of relay antennas also needs to be no less
than the number of sources. With \DSTC-ICRec, the $1_J\times
M_1\times N_1$ MARN admits at most $\min\{M, N\}$ sources to
concurrently transmit information.

Now we discuss the symbol rate of the scheme. The multi-source
transmission in \DSTC-ICRec improves the spectrum efficiency of the
network. In the first step, each source sends a vector of $T=2^n$
symbols in $T$ channel uses where $2^n$ is the minimum number no
less than $M$. Using DSTC with quasi-orthogonal design at the relay
\cite{DSTC-OD}, another $T$ channel uses are required for the second
step. Overall, $2T$ channel uses are required to send $T$ symbols
from end to end. Thus, the symbol rate is $1/2$
symbols/source/channel use, which is independent of the number of
sources.

Next, we compare the diversity gain, symbol rate, and CSI
requirements at the relay of \DSTC-ICRec with two existing schemes:
\full-TDMA-DSTC and \IC-Relay-TDMA\cite{ICRelay-TDMA-jou}, that also
fit the linear framework. The results are shown in Table
\ref{table-comp}. \DSTC-ICRec achieves a higher symbol rate compared
to \IC-Relay-TDMA and \full-TDMA-DSTC. For \DSTC-ICRec, the
throughput of the network grows linearly with the number of sources;
while for the other two, the throughput of the network is
$\frac{J}{1+J}$ for \IC-Relay-TDMA and 1/2 for \full-TDMA-DSTC,
which has an upperbound when $J$ grows large. However, from Theorem
\ref{thm-DSTCICRec1}, the diversity gain of \DSTC-ICRec is
upperbounded by $M-J+1$, thus is inferior to \full-TDMA-DSTC and no
better than \IC-Relay-TDMA. For \DSTC-ICRec, diversity gain
degradation is necessary to trade for a higher symbol rate.
Regarding CSI, the relay does not need any channel information for
\DSTC-ICRec or \full-TDMA-DSTC, while for \IC-Relay-TDMA, the relay
needs to know the source-relay channels.

Finally, \DSTC-ICRec can be applied in more general network models.
\DSTC-ICRec can be used in MARNs with distributed relay antennas.
From \eqref{eq-DSTC}, the forwarded signal from a relay antenna
$\mb{t}_i$ only depends on its own received signal $\mb{r}_i$. No
cross-talk between relay antennas is needed. Thus, the relay
antennas do not have to be collocated to conduct the scheme. It is
the total number of relay antennas that matters. Furthermore,
\DSTC-ICRec can also be straightforwardly used in the interference
relay network with multi-antenna destinations. The protocol
description shows that each destination can use its multi-antennas
to decouple the information streams of multi-sources and decode the
information of its interest as long as the numbers of destination
antennas and distributed relay antennas are no less than the number
of sources.

\section{The Protocol of \TDMA-ICRec}\label{sec-TDMA-ICRec}
Although \DSTC-ICRec improves the spectrum efficiency of MARNs, it
cannot achieve the maximum int-free diversity. In this section, we
propose another protocol that has the potential of achieving the
same int-free diversity gain but with a higher symbol rate compared
to \full-TDMA-DSTC. In $1_J\times M_1 \times N_1$ MARNs, the
source-relay link has $M$ independent channel paths for each source
and the relay-destination link has $MN$ independent channel paths.
The diversity gain is thus bottlenecked by the source-relay link. We
propose to use TDMA in the source-relay link to achieve the maximum
diversity gain, and in the relay-destination link, concurrent
transmission of information streams from different sources is
designed to improve the symbol rate. We denote this protocol as
\TDMA-ICRec. In Subsection \ref{subsec-general}, we present details
of the protocol. We analyze the diversity gain of the protocol in
Subsection \ref{subsec-divana2} and compare it with other schemes in
Subsection \ref{subsec-comp}.

\subsection{Protocol Description}\label{subsec-general}
Since \TDMA-ICRec uses TDMA in the source-relay link, the main
challenge in the design is to allow concurrent transmission of
multi-sources in the relay-destination link, and decouple the
multiple information streams at the destination. Our proposed
protocol requires the relay to know its channels with all sources,
which can be obtained by training, and does not require CSI
feedback. It fits the linear framework introduced in Section 1. In
what follows, we first explain the protocol for the $1_J\times
(2J)_1\times N_1$ MARN, followed by the $1_J\times (4J)_1\times
(N)_1$ MARN, then extend the design to the general case of
$1_J\times M_1\times N_1$ MARNs.

\subsubsection{\TDMA-ICRec for the $1_J\times (2J)_1\times N_1$ MARN}\label{subsec-subsub}
In this subsection, we describe \TDMA-ICRec for the MARN in which
the number of relay antennas is twice that of the number of sources.
The protocol of \TDMA-ICRec consists of two steps as shown in
Fig.~\ref{fig-TDMAICRec-block}. During the first step, two symbols
randomly selected from one constellation $\mc{S}$ are collected by
Source $j$ to form a vector as $\mb{s}^{(j)}=\left[s_1^{(j)}\
s_2^{(j)}\right]^t$. Source $j$ uses time slots $2j-1$ and $2j$ to
send $\mb{s}^{(j)}$. In other words, sources transmit to the relay
in TDMA. In time slots $(2j-1)$ and $2j$, the $i$-th relay antenna
overhears {\be\nonumber
\underset{\mb{r}^{(j)}_i}{\underbrace{\left[\begin{array}{c}r_{(2j-1)i}\\r_{(2j)i}\end{array}\right]}}=\sqrt{P}f^{(j)}_i\mb{s}^{(j)}+\underset{\mb{v}^{(j)}_i}{\underbrace{\left[\begin{array}{c}v_{(2j-1)i}\\v_{(2j)i}\end{array}\right]}},
i=1,\ldots, 2J,\ j=1, \ldots, J, \ee}where $r_{\tau i}$ and $v_{\tau
i}$ denote the received signal and the AWGN at the $i$-th relay
antenna and time slot $\tau$, respectively. The relay coherently
combines signals at each antenna to maximize the SNR of Source $j$'s
transmission and obtains a soft estimate of $\mb{s}^{(j)}$ as, {\be
\label{eq-relayMRCdb}
\hat{\mb{r}}^{(j)}=\frac{\underset{i=1:2J}{\sum}\overline{f^{(j)}_{i}}\mb{r}_i^{(j)}}{\underset{i=1:2J}{\sum}\left|f_{i}^{(j)}\right|^2}=\sqrt{P}\mb{s}^{(j)}+\underset{\hat{\mb{v}}^{(j)}}{\underbrace{\frac{\underset{i=1:2J}{\sum}\overline{f^{(j)}_{i}}\mb{v}^{(j)}_i}{\underset{i=1:2J}{\sum}\left|f_{i}^{(j)}\right|^2}}},
\ee} \hspace{-3pt}where the $2\times 1$ noise vector
$\hat{\mb{v}}^{(j)}$ has
i.i.d.~$\mc{CN}\left(0,\left(\underset{i=1:2J}{\sum}\left|f_{i}^{(j)}\right|^2\right)^{-1}\right)$
entries. The relay uses Alamouti DSTC\cite{DSTC-OD} to encode the
soft estimate of $\mb{s}^{(j)}$ into
{\small\setlength{\arraycolsep}{2pt}\be\label{eq-Alarelay}\left[\begin{array}{cc}\mb{t}_{(2j-1)}&
\mb{t}_{(2j)}\end{array}\right]=\sqrt{\frac{P}{MP+M}}\left[\begin{array}{cc}\mb{A}_1\hat{\mb{r}}^{(j)}&
\mb{B}_2\ol{\hat{\mb{r}}^{(j)}}\end{array}\right], \ee}
\hspace{-3pt}where $\sqrt{\frac{P}{MP+M}}$ is to constrain the
average relay power to $P$; the encoding matrices $\mb{A}_1$ and
$\mb{B}_2$ are given in \eqref{eq-Alamouti}. From
\eqref{eq-relayMRCdb} and \eqref{eq-Alarelay}, the relay generates
the signal $\mb{t}_i$ by a linear transformation from its received
signal $\mb{r}_i^{(j)}$.

During the second step, the relay forwards the $2\times 1$ vector
$\mb{t}_i$ using its $i$-th antenna. All relay antennas transmit
simultaneously to realize concurrent transmissions of all sources'
information streams. From \eqref{eq-Alarelay}, we can see that each
source is assigned two antennas and $J$ Alamouti DSTCs are
concurrently transmitted to the destination. Denote $y_{\tau n}$ as
the received signal at time slot $\tau$ and the $n$-th antenna at
the destination. An equivalent system can be obtained as
{\setlength{\arraycolsep}{2pt}
\be\label{eq-recsigdb}\underset{\tilde{\mb{y}}_n}{\underbrace{\left[\begin{array}{c}y_{1n}\\\ol{y_{2n}}\end{array}\right]}}=\sqrt{\frac{P^2}{MP+M}}\sum_{j=1:J}\underset{\mb{G}_{jn}}{\underbrace{\left[\begin{array}{cc}g_{(2j-1)n}&-g_{(2j)n}\\\ol{g_{(2j)n}}&\ol{g_{(2j-1)n}}\end{array}\right]}}\left[\begin{array}{c}s^{(j)}_1\\\ol{s^{(j)}_2}\end{array}\right]+\sqrt{\frac{P}{MP+M}}\sum_{j=1:J}\mb{G}_{jn}\underset{\tilde{\mb{v}}^{(j)}}{\underbrace{\left[\begin{array}{c}\hat{v}^{(j)}_1\\\ol{\hat{v}_2^{(j)}}\end{array}\right]}}+\underset{\tilde{\mb{w}}_n}{\underbrace{\left[\begin{array}{c}w_{1n}\\\ol{w_{2n}}\end{array}\right]}},\ee
}where $\hat{v}_i^{(j)}$ is the $i$-th entry of $\hat{\mb{v}}^{(j)}$
in \eqref{eq-relayMRCdb}. Note that the $2\times 2$ equivalent
channel matrix $\mb{G}_{jn}$ has Alamouti structure. The equivalent
system equation in \eqref{eq-recsigdb} is similar to that of a
multi-user multi-antenna MAC system except that the equivalent
noises are not white. By applying the multi-user IC schemes in
\cite{AlCa}, the destination can iteratively cancel the symbols of
$J-1$ interfering sources using signals at any $J-1$ antennas. For
full IC, $N\ge J$ is required. Here, we provide a compact matrix
representation of this algorithm, which is not provided in
\cite{AlCa, KaJa}, because the resulting equations are needed for
the diversity analysis. Without loss of generality, we show how the
destination cancels the information of Sources 2 to $J$ and obtains
int-free observations of Source 1 in $J-1$ iterations.

Stack $\tilde{\mb{y}}_n$ to obtain
$\tilde{\mb{y}}=[\tilde{\mb{y}}_1^*,\ldots,\tilde{\mb{y}}_N^*]^*$
and let $\mb{G}_j=[\mb{G}_{j1}^*\ \ldots \mb{G}_{jN}^*]^*$ for $j=1,
\ldots, J$.  The iterative process is described as
follows:{\setlength{\arraycolsep}{2pt}
\begin{itemize}
  \item
  \textbf{Initialization}: $\mc{G}(0)=\left[\begin{array}{ccc}\mb{G}_{1}&\ldots&\mb{G}_{J}\end{array}\right]$,
  $\tilde{\mb{y}}(0)=\tilde{\mb{y}}$.
  \item \textbf{For the $i$-th iteration}: $i=1,\ldots, J-1$
  \begin{enumerate}
    \item Form the $2(N-i)\times 2(N-i+1)$ IC matrix $\mb{B}(i)$ as \\
{\small\begin{eqnarray}\label{eq-ICmatrix2}
  \mb{B}(i)&=&\left[\begin{array}{ccccc}-\frac{2{\mc{G}^*}_{J-i+1,1}(i-1)}{\left\|\mc{G}_{J-i+1,1}(i-1)\right\|^2}&\frac{2{\mc{G}^*}_{J-i+1,2}(i-1)}{\left\|\mc{G}_{J-i+1,2}(i-1)\right\|^2}&\mb{0}_2&\ldots&\mb{0}_2\\
    -\frac{2{\mc{G}^*}_{J-i+1,1}(i-1)}{\left\|\mc{G}_{J-i+1,1}(i-1)\right\|^2}&\mb{0}_2&\frac{2{\mc{G}^*}_{J-i+1,3}(i-1)}{\left\|\mc{G}_{J-i+1,3}(i-1)\right\|^2}&\ldots&\mb{0}_2\\
    \vdots&\vdots&\vdots&\ddots&\vdots\\
    -\frac{2{\mc{G}^*}_{J-i+1,1}(i-1)}{\left\|\mc{G}_{J-i+1,1}(i-1)\right\|^2}&\mb{0}_2&\mb{0}_2&\ldots&\frac{2{\mc{G}^*}_{J-i+1,N-i+1}(i-1)}{\left\|\mc{G}_{J-i+1,N-i+1}(i-1)\right\|^2}\end{array}\right],
   \end{eqnarray}}\\
   where the $2\times 2$ matrix $\mc{G}_{p,q}(i)$ denotes the $(p,q)$-th $2\times 2$ submatrix of
$\mc{G}(i)$.
   \item Cancel the symbols of Source $J-i+1$ by calculating
   $\tilde{\mb{y}}(i)=\mb{B}(i)\tilde{\mb{y}}(i-1)$.
    \item Form the $2(N-i)\times 2J$ remaining equivalent channel matrix $\mc{G}(i)$ as $
    \mc{G}(i)=\mb{B}(i)\mc{G}(i-1).$
  \end{enumerate}
\end{itemize}}

Note that $\tilde{\mb{y}}(i)$ is the $2(N-i)\times 1$ signal vector
after cancelling the information Source $J-i+1$. After $J-1$
iterations, $\tilde{\mb{y}}(J-1)$ only contains the information of
Source 1 and has dimension $2(N-J+1)\times 1$. Let
$\mb{B}=\underset{i=1:J-1}{\prod}\mb{B}(i)$. This iterative IC
process can be expressed as a linear operation on $\tilde{\mb{y}}$
as {\small\be\label{eq-remaindb}
\tilde{\mb{y}}(J-1)=\mb{B}\tilde{\mb{y}}=\sqrt{\frac{P^2}{MP+M}}\mb{B}\mb{G}_1\left[\begin{array}{c}s^{(1)}_1\\\ol{s^{(1)}_2}\end{array}\right]+\underset{\mb{n}}{\underbrace{\sqrt{\frac{P}{MP+M}}\mb{B}\mb{G}_1\tilde{\mb{v}}^{(1)}+\mb{B}\tilde{\mb{w}}}}.
\ee} \hspace{-4pt}where {\small$\tilde{\mb{w}}=
\left[\tilde{\mb{w}}_1^*\ \cdots \ \tilde{\mb{w}}_N^*\right]^*$} and
$\mb{n}$ denotes the equivalent noise vector after IC. Note that
${\mb{n}}$ is Gaussian but not white. After straightforward
calculation, its covariance matrix can be calculated as
\be\label{eq-noisecov2}\mb{R}_\mb{n}=\frac{c_1^2}{\underset{i=1:M}{\sum}\left|f_i^{(1)}\right|^2}\mb{B}\mb{G}_1\mb{G}_1^*\mb{B}^*+\mb{B}\mb{B}^*,\ee
where $c_1=\sqrt{\frac{P}{MP+M}}$. The ML decoding of Source 1's
information can be performed as
{\small\begin{equation}\label{eq-MLTDMA} \arg
\min_{s_1^{(1)},s_2^{(1)} \in
\mc{S}}\left(\mb{B}\tilde{\mb{y}}-\sqrt{P}c_1\mb{B}\mb{G}_1\left[\begin{array}{c}s_1^{(1)}\\\ol{s_2^{(1)}}\end{array}\right]\right)^*\mb{R}_\mb{n}^{-1}\left(\mb{B}\tilde{\mb{y}}-\sqrt{P}c_1\mb{B}\mb{G}_1\left[\begin{array}{c}s_1^{(1)}\\\ol{s_2^{(1)}}\end{array}\right]\right).
\end{equation}}
Since the $2\times 2$ submatrices of $\mb{B}$, $\mb{G}_1$, and
$\mb{R}_\mb{n}$ have Alamouti structure, \eqref{eq-MLTDMA} can be
further decoupled into two procedures of symbol-wise decoding
following the similar argument in Subsection \ref{subsec-motivate}.
Likewise, the information of the other $J-1$ sources can be
decoupled and decoded. In total, $2J$ symbol-wise ML decoding
procedures are required to decode all sources' information.
Therefore, the decoding complexity is linear in the number of
sources.

\subsubsection{\TDMA-ICRec for the $1_J\times (4J)_1\times N_1$ MARN}
In this subsection, we describe \TDMA-ICRec for the MARN where the
number of relay antennas is four times the number of source
antennas. During the first step, Source $j$ collects four symbols
$s_i^{(j)}\ (i=1 ,\ldots ,4)$, in which $s_1^{(j)}, s_2^{(j)} \in
\mc{S}$ and $s_3^{(j)}, s_4^{(j)} \in \mc{S}'$. The constellation
$\mc{S}'$ is obtained by rotating $\mc{S}$ \cite{ShPa03, SuXia04}.
Source $j$ transmits a vector of these four symbols to the relay in
four time slots, and sources timeshare the source-relay link. The
relay obtains a soft estimate of each symbol from Source $j$ by
coherently combining signals at different antennas as in
\eqref{eq-relayMRCdb}, then linearly transforms this soft estimate
$\hat{\br}^{(j)}$ into a quasi-orthogonal DSTC by $[\mb{t}_{4j-3}\
\mb{t}_{4j-2}\ \mb{t}_{4j-1}\
\mb{t}_{4j}]=c_2\left[\mb{A}_1\hat{\mb{r}}^{(j)}\
\mb{B}_2\ol{\hat{\mb{r}}^{(j)}} \ \mb{B}_3\ol{\hat{\mb{r}}^{(j)}}\
\mb{A}_4\hat{\mb{r}}^{(j)}\right]$, where the scalar
$c_2=\sqrt{\frac{P}{MP+M}}$ normalizes the average power at the
relay to $P$; and $\mb{A}_i$ and $\mb{B}_i$ are the DSTC encoding
matrices\cite{DSTC-OD}, as given in \eqref{eq-quasides}.

During the second step, information streams from all sources are
concurrently forwarded to the destination by sending $\mb{t}_i$ from
the $i$-th antenna. With this design, the relay uses four of its
antennas to forward the quasi-orthogonal DSTC of each source and the
information of all sources is forwarded concurrently. Denote
$y_{\tau n}$ and $w_{\tau n}$ as the sampled signal and noise at the
$n$-th destination antenna and time slot $\tau$, respectively.
Following the analysis in \cite{KaJa}, two equivalent Alamouti
systems can be obtained as{\small\setlength{\arraycolsep}{1pt}
\begin{eqnarray*}
\underset{\mb{y}_n^+}{\underbrace{\left[\begin{array}{c}y_{1n}+y_{4n}\\
\ol{y_{2n}}-\ol{y_{3n}}\end{array}\right]}}
&=&\sqrt{P}c_2\sum_{j=1:J}\underset{\mb{G}_{jn}^+}{\underbrace{\left[\begin{array}{cc}g_{1n}^{(j)}+g_{4n}^{(j)}&g_{2n}^{(j)}-g_{3n}^{(j)}\\\ol{g_{2n}^{(j)}}-\ol{g_{3n}^{(j)}}&-\ol{g_{1n}^{(j)}}-\ol{g_{4n}^{(j)}}\end{array}\right]}}\underset{\mb{s}^{(j)+}}{\underbrace{\left[\begin{array}{c}s_1^{(j)}+s_4^{(j)}\\ \ol{s_3^{(j)}}-\ol{s_2^{(j)}}\end{array}\right]}}+c_2\sum_{j=1:J}\mb{G}_{jn}^+\underset{\mb{v}^{(j)+}}{\underbrace{\left[\begin{array}{c}\tilde{v}_1^{(j)}+\tilde{v}_4^{(j)}\\ \ol{\tilde{v}_3^{(j)}}-\ol{\tilde{v}_2^{(j)}}\end{array}\right]}}+\underset{\mb{w}_n^+}{\underbrace{\left[\begin{array}{c}w_{1n}+w_{4n}\\\ol{w_{2n}}-\ol{w_{3n}}\end{array}\right]}}\\
\underset{\mb{y}_n^-}{\underbrace{\left[\begin{array}{c}y_{1n}-y_{4n}\\\ol{y_{2n}}+\ol{y_{3n}}\end{array}\right]}}&=&\sqrt{P}c_2\sum_{j=1:J}\underset{\mb{G}_{jn}^-}{\underbrace{\left[\begin{array}{cc}g_{1n}^{(j)}-g_{4n}^{(j)}&g_{2n}^{(j)}+g_{3n}^{(j)}\\\ol{g_{2n}^{(j)}}+\ol{g_{3n}^{(j)}}&-\ol{g_{1n}^{(j)}}+\ol{g_{4n}^{(j)}}\end{array}\right]}}\underset{\mb{s}^{(j)-}}{\underbrace{\left[\begin{array}{c}s_1^{(j)}-s_4^{(j)}\\
-\ol{s_3^{(j)}}-\ol{s_2^{(j)}}\end{array}\right]}}+c_2\sum_{j=1:J}\mb{G}_{jn}^-\underset{\mb{v}^{(j)-}}{\underbrace{\left[\begin{array}{c}\tilde{v}_1^{(j)}-\tilde{v}_4^{(j)}\\
-\ol{\tilde{v}_3^{(j)}}-\ol{\tilde{v}_2^{(j)}}\end{array}\right]}}+\underset{\mb{w}_n^-}{\underbrace{\left[\begin{array}{c}w_{1n}-w_{4n}\\\ol{w_{3n}}+\ol{w_{2n}}\end{array}\right]}},
\end{eqnarray*}}
\hspace{-4pt}where $g_{kn}^{(j)}$ $(k=1, 2, 3, 4)$ denotes the four
channel paths from the four relay antennas that forward Source $j$
's DSTC to the $n$-th destination antenna, i.e.,
$g_{kn}^{(j)}=g_{(4j-4+k)n}$; $\tilde{v}_k^{(j)}$ denotes the
equivalent noises at the relay, which can be shown to be i.i.d.
{\small$\mc{CN}\left(0,\left(\underset{i=1:M}{\sum}\left|f_{i}^{(j)}\right|^2\right)^{-1}\right)$}.
By applying the multi-user IC proposed for quasi-orthogonal STBC in
\cite{KaJa}, the destination can cancel the symbols of Sources 2 to
J for each Alamouti system. Denote $\mb{B}^{\star}$ as the IC matrix
for system $\star$, $(\star=+,-)$, which can be obtained similarly
following the iterative IC process in Subsection
\ref{subsec-subsub}. Let $\mb{G}_1^{\star}=\left[\mb{G}_{11}^{\star
t}\ \cdots\ \mb{G}_{1N}^{\star t}\right]^t$ and
$\mb{y}^\star=\left[\mb{y}_1^{\star t}\ \cdots\ \mb{y}_N^{\star
t}\right]^t$. The resulting system equation for Source 1's
information after the IC can be expressed
as{\small\setlength{\arraycolsep}{1pt}
\be\label{eq-remainfour}\left[\begin{array}{c}\mb{B}^+\mb{y}^+\\\mb{B}^-\mb{y}^-\end{array}\right]=\sqrt{P}c_2\underset{\mb{H}}{\underbrace{\left[\begin{array}{cc}\mb{B}^+\mb{G}_1^{+}&\mb{0}\\\mb{0}&\mb{B}^-\mb{G}_1^{-}\end{array}\right]}}\left[\begin{array}{c}s_1^{(1)}+s_4^{(1)}\\\ol{s_3^{(1)}}-\ol{s_2^{(1)}}\\s_1^{(1)}-s_4^{(1)}\\-\ol{s_3^{(1)}}-\ol{s_2^{(1)}}\end{array}\right]+\underset{\mb{n}}{\underbrace{c_2\left[\begin{array}{cc}\mb{B}^+\mb{G}_1^{+}&\mb{0}\\\mb{0}&\mb{B}^-\mb{G}_1^{-}\end{array}\right]\left[\begin{array}{c}\mb{v}^{(1)+}\\
\mb{v}^{(1)-}\end{array}\right]+\left[\begin{array}{cc}\mb{B}^{+}&\mb{0}\\\mb{0}&\mb{B}^{-}\end{array}\right]\mb{w}}},
\ee} \hspace{-4pt}where $\mb{w}=\left[\mb{w}_1^{+*}\ \cdots\
\mb{w}_N^{+*}\ \mb{w}_1^{-*} \ \cdots \ \mb{w}_N^{-*}\right]^*$.
$\mb{H}$ and $\mb{n}$ denote the $4(N-J+1)\times 4$ equivalent
channel matrix and the $4(N-J+1)\times 1$ equivalent noise vector,
respectively. From \eqref{eq-remainfour}, it can be shown that two
procedures of pair-wise ML decoding are sufficient to decode the
four symbols of Source 1.

\subsubsection{\TDMA-ICRec for general $1_J\times M_1\times N_1$ MARNs}
For general $1_J\times M_1\times N_1$ MARNs, each source transmits a
vector of $2^n$ symbols in TDMA during the first step, where $2^n$
is the minimum number that is no less than
$\lfloor\frac{M}{J}\rfloor$. The relay constructs one $2^n\times
2^n$ DSTC using the quasi-orthogonal STBCs with ABBA structure for
each source\cite{QOD,TBH00}. During the second step, the first
$\lfloor\frac{M}{J}\rfloor$ columns of each DSTC are forwarded using
$\lfloor\frac{M}{J}\rfloor$ antennas of the relay. All DSTCs are
concurrently forwarded to the destination. The destination separates
the equivalent system into $2^{n-1}$ Alamouti
systems\cite{ICRelay-TDMA-jou}, then decouples information of each
source by IC \cite{KaJa}, after which, decodes each source's
information independently.

\subsection{Diversity Gain Analysis}\label{subsec-divana2}
In this subsection, we analyze the achievable diversity gain of
\TDMA-ICRec. As discussed in Subsection \ref{subsec-divupperbound},
the diversity gain can be calculated using the outage probability of
the instantaneous normalized receive SNR as in \eqref{eq-div}. To
help the presentation, we use an equivalent representation of
\eqref{eq-div}. We say that an instantaneous normalized receive SNR
$\gamma$ provides a diversity gain of $d$ if
$P(\gamma<\epsilon)=\alpha_1\epsilon^d+o(\epsilon^d)$ with
$\alpha_1$ independent of $\epsilon$. To calculate the diversity
gain of \TDMA-ICRec, the following lemma is
used\cite{ICRelay-TDMA-jou}.
\begin{lemma}
Let $\gamma_1, \gamma_2, \ldots, \gamma_k, \gamma_g$ be $k+1$
instantaneous normalized receive SNRs. $\gamma_g$ is independent of
$\gamma_n$ for $n=1, 2,\ldots, k$. $\gamma_g$ provides a diversity
gain of $d_1$; $\underset{n=1:k}{\sum}\gamma_n$ provides a diversity
gain of $d_2$. If
$\gamma=\underset{n=1:k}{\sum}\frac{\gamma_n\gamma_g}{\gamma_n+\gamma_g}$,
$\gamma$ provides a diversity gain of $\min\{d_1, d_2\}$.
\end{lemma}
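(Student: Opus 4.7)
The plan is to sandwich $\gamma$ by two simpler random variables whose diversity gains can be read off directly from $\gamma_g$ and $\sum_n \gamma_n$, and then show the outage probability of $\gamma$ inherits the smaller of $d_1$ and $d_2$.

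The elementary inequality I would rely on is
$$ \tfrac{1}{2}\min(\gamma_n,\gamma_g) \;\le\; \frac{\gamma_n\gamma_g}{\gamma_n+\gamma_g} \;\le\; \min(\gamma_n,\gamma_g),$$
which follows from $\max(\gamma_n,\gamma_g)\le \gamma_n+\gamma_g\le 2\max(\gamma_n,\gamma_g)$. Summing over $n$, this yields
$$ \tfrac{1}{2}\sum_{n=1}^{k}\min(\gamma_n,\gamma_g) \;\le\; \gamma \;\le\; \min\Big(k\gamma_g,\ \sum_{n=1}^{k}\gamma_n\Big),$$
where the right-hand bound uses $\gamma_n\gamma_g/(\gamma_n+\gamma_g)\le\gamma_g$ and $\le\gamma_n$ separately.

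For the \emph{lower} bound on the outage probability (which caps the achievable diversity), I would use the upper bound on $\gamma$ to write
$$ P(\gamma<\epsilon)\;\ge\; P\big(\{k\gamma_g<\epsilon\}\cup\{\textstyle\sum_n\gamma_n<\epsilon\}\big)\;\ge\; \max\!\big(P(\gamma_g<\epsilon/k),\ P(\textstyle\sum_n\gamma_n<\epsilon)\big),$$
so by hypothesis $P(\gamma<\epsilon)\ge c_1\epsilon^{d_1}+c_2\epsilon^{d_2}-o(\cdot)$, which certifies a diversity no larger than $\min\{d_1,d_2\}$.

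For the \emph{upper} bound on the outage probability, I would use the lower bound on $\gamma$ and split on the size of $\gamma_g$:
$$ P(\gamma<\epsilon)\;\le\; P\!\Big(\textstyle\sum_n\min(\gamma_n,\gamma_g)<2\epsilon\Big) \;\le\; P(\gamma_g<2\epsilon)\;+\;P\!\Big(\textstyle\sum_n\min(\gamma_n,\gamma_g)<2\epsilon,\ \gamma_g\ge 2\epsilon\Big).$$
On the event $\{\gamma_g\ge 2\epsilon\}$ we have $\min(\gamma_n,\gamma_g)\ge\min(\gamma_n,2\epsilon)$, and if $\sum_n\min(\gamma_n,2\epsilon)<2\epsilon$ then every $\gamma_n<2\epsilon$, forcing $\min(\gamma_n,2\epsilon)=\gamma_n$; hence the second probability is bounded by $P(\sum_n\gamma_n<2\epsilon)$. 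Using the hypotheses on $d_1$ and $d_2$, this gives $P(\gamma<\epsilon)=O(\epsilon^{d_1})+O(\epsilon^{d_2})=O(\epsilon^{\min\{d_1,d_2\}})$, completing the matching direction.

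The main obstacle is the upper-bound step: unlike the lower bound, which follows immediately from independence and inclusion, bounding $P(\sum_n\min(\gamma_n,\gamma_g)<2\epsilon)$ requires the conditioning trick on $\{\gamma_g\ge 2\epsilon\}$ to convert the truncated sum back into a sum of the original $\gamma_n$'s so that the hypothesis on $d_2$ becomes directly applicable. Everything else is routine manipulation of the harmonic-mean inequality and the definition of diversity in~\eqref{eq-div}.
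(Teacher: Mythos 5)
Your proof is correct. One point of context: the paper does not actually prove this lemma itself --- it is quoted from the authors' earlier work \cite{ICRelay-TDMA-jou} and used as a black box in Appendix B --- so the comparison is with that cited proof, which proceeds by conditioning on $\gamma_g$ and exploiting the independence hypothesis to expand the outage probability of the harmonic-mean sum directly. Your route is different and more elementary: the two-sided bound $\tfrac12\min(\gamma_n,\gamma_g)\le\gamma_n\gamma_g/(\gamma_n+\gamma_g)\le\min(\gamma_n,\gamma_g)$, the resulting sandwich $\tfrac12\sum_n\min(\gamma_n,\gamma_g)\le\gamma\le\min\bigl(k\gamma_g,\sum_n\gamma_n\bigr)$, and the truncation trick on $\{\gamma_g\ge 2\epsilon\}$ (which correctly converts $\sum_n\min(\gamma_n,\gamma_g)<2\epsilon$ into $\sum_n\gamma_n<2\epsilon$) together give $P(\gamma<\epsilon)=\Theta(\epsilon^{\min\{d_1,d_2\}})$. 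Notably, your argument never uses the independence of $\gamma_g$ from the $\gamma_n$, so it is strictly more general than the stated lemma; what it buys is robustness and simplicity, while what it gives up is the exact leading-order expansion $\alpha\epsilon^{\min\{d_1,d_2\}}+o(\epsilon^{\min\{d_1,d_2\}})$ that the paper's definition of ``provides a diversity gain'' nominally asks for --- your two-sided bound pins down only the exponent, with constants differing by the factor $2$ (and $k$) introduced by the sandwich. Since everything downstream (Theorem \ref{thm-TDMAICRec2}) uses only the exponent via \eqref{eq-div}, this is a cosmetic rather than substantive shortfall; if you want the statement in its literal expansion form, you would need the conditioning argument of the cited reference, where independence is genuinely used.
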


Here is the theorem on the diversity gain of \TDMA-ICRec.

\begin{theorem}\label{thm-TDMAICRec2}
In $1_J\times M_1\times N_1$ MARNs, \TDMA-ICRec achieves a diversity
gain of $\min\{M, \lfloor\frac{M}{J}\rfloor(N-J+1)\}$.
\end{theorem}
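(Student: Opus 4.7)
The plan is to express the end-to-end instantaneous normalized receive SNR $\gamma$ for Source $1$ in a form suitable for the preceding lemma, and then invoke the lemma. Starting from the post-IC system equation (the general analogue of \eqref{eq-remaindb} with $\lfloor M/J\rfloor$ antennas per source instead of $2$), the equivalent noise $\mathbf{n}$ splits into two independent pieces: a relay-noise piece that is forwarded through $\mathbf{B}\mathbf{G}_1$ and scaled by the inverse first-hop MRC gain, and a destination-noise piece $\mathbf{B}\tilde{\mathbf{w}}$. Because of the soft-MRC-then-DSTC structure, $\mathbf{R}_{\mathbf{n}}$ has exactly the amplify-and-forward form already visible in \eqref{eq-noisecov2}, so computing $\mathbf{h}^*\mathbf{R}_{\mathbf{n}}^{-1}\mathbf{h}$ for each of the two columns of $\mathbf{B}\mathbf{G}_1$ yields an end-to-end SNR of harmonic-mean type
\begin{equation*}
\gamma \;=\; \sum_{n=1}^{N-J+1}\frac{\gamma_{g,n}\,\gamma_f}{\gamma_{g,n}+\gamma_f},
\end{equation*}
where $\gamma_f=\sum_{i=1}^{M}|f_i^{(1)}|^2$ is the first-hop SNR (from MRC at the relay over all $M$ relay antennas, since sources time-share the first link), and each $\gamma_{g,n}$ is a nonnegative quadratic form in the entries of $\mathbf{B}\mathbf{G}_1$ associated with the $n$th surviving effective destination branch after IC.

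Next I would verify the two hypotheses of the lemma. For the first hop, $\gamma_f$ is a sum of $M$ i.i.d.\ exponential random variables, hence chi-squared with $2M$ degrees of freedom, so $P(\gamma_f<\epsilon)=\Theta(\epsilon^M)$ and $\gamma_f$ provides diversity $d_1=M$. Moreover $\gamma_f$ depends only on the source--relay channels and is thus independent of all $\gamma_{g,n}$, as required by the lemma. For the second hop I would show $\sum_n\gamma_{g,n}$ provides diversity $d_2=\lfloor M/J\rfloor(N-J+1)$: after the iterative IC described by \eqref{eq-ICmatrix2}, $\mathbf{B}$ zero-forces the $J-1$ interfering sources, each occupying a $\lfloor M/J\rfloor$-dimensional channel subspace seen through the Alamouti/quasi-orthogonal DSTC structure, leaving $N-J+1$ effective destination branches. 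Conditioning on the interfering relay--destination channels, the residual equivalent channel for Source $1$ has entries that are independent linear combinations of its $\lfloor M/J\rfloor$ i.i.d.\ $\mathcal{CN}(0,1)$ paths, and the resulting quadratic form is chi-squared-like with $2\lfloor M/J\rfloor(N-J+1)$ effective degrees of freedom; a standard argument (removing the measure-zero event that the projector degenerates) shows the conditional outage scales as $\epsilon^{\lfloor M/J\rfloor(N-J+1)}$, and the unconditional probability inherits the same exponent.

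Applying the lemma with $d_1=M$ and $d_2=\lfloor M/J\rfloor(N-J+1)$ immediately gives the claimed diversity gain $\min\{M,\lfloor M/J\rfloor(N-J+1)\}$. The main obstacle is Step~3, namely showing that the post-IC quantities $\gamma_{g,n}$ collectively retain diversity order $\lfloor M/J\rfloor(N-J+1)$. The IC matrix $\mathbf{B}$ couples the branches through its dependence on all interfering channels, so one cannot treat the $\gamma_{g,n}$ as independent chi-squared variables directly. I would circumvent this by conditioning on the column space spanned by the interfering Source $j\ge 2$ relay--destination channels and invoking the well-known MIMO ZF-receiver fact that projection onto the $(N-J+1)$-dimensional orthogonal complement preserves the Gaussianity and mutual independence of Source $1$'s residual path coefficients, so the conditional distribution of $\sum_n\gamma_{g,n}$ is a chi-square with exactly $2\lfloor M/J\rfloor(N-J+1)$ degrees of freedom almost surely, from which the diversity exponent follows. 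Verifying this rigorously for the quasi-orthogonal ABBA case (via the $\mathbf{y}_n^+,\mathbf{y}_n^-$ Alamouti-system decomposition in \TDMA-ICRec) is the most technical part of the argument.
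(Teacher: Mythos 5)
Your proposal takes essentially the same route as the paper's proof: express the post-IC instantaneous normalized receive SNR in harmonic-mean form, show the first-hop MRC gain $\sum_{i=1}^{M}|f_i^{(1)}|^2$ is Gamma distributed with degree $M$ and independent of the second hop, show the post-projection second-hop gain has diversity $\lfloor M/J\rfloor(N-J+1)$ (because the IC matrix $\mb{B}$ depends only on the interferers' channels and $\mb{B}^*(\mb{BB}^*)^{-1}\mb{B}$ is a projection onto the orthogonal complement of their column space), and then invoke the harmonic-mean lemma. The only detail that differs from the exact computation is your claimed identity $\gamma=\sum_{n=1}^{N-J+1}\frac{\gamma_{g,n}\gamma_f}{\gamma_{g,n}+\gamma_f}$: since the forwarded relay noise couples the surviving branches, the correct expression (obtained via the matrix inversion lemma and the fact that $\mb{G}_1^*\mb{B}^*(\mb{BB}^*)^{-1}\mb{B}\mb{G}_1$ is a scaled identity by the Alamouti structure) is a single harmonic-mean term per Alamouti subsystem involving the aggregate gain $y=\mb{g}_1^*\mb{B}^*(\mb{BB}^*)^{-1}\mb{B}\mb{g}_1$, e.g.\ $\gamma=\frac{xy}{x+c_1^2y}$ for $M=2J$ and a sum of two such terms for $M=4J$—but since the lemma is applied to the diversity of the aggregate second-hop gain in either case, this does not change the conclusion $\min\{M,\lfloor\frac{M}{J}\rfloor(N-J+1)\}$.
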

\begin{proof}
See Appendix \ref{ap-thm2}.
\end{proof}

Intuitively, the result in Theorem \ref{thm-TDMAICRec2} can be
explained as follows. From the protocol design, since sources are
assigned to orthogonal channels in the first step of transmission,
the maximum diversity gain that can be achieved in the source-relay
link is $M$. For the second step, each source is allocated
$\lfloor\frac{M}{J}\rfloor$ antennas of the relay. Then, the
transmit diversity gain is $\lfloor\frac{M}{J}\rfloor$. Similar to
MAC systems, the destination uses $J-1$ antennas to cancel the
symbols of interfering sources and obtains the full receive
diversity gain $N-J+1$ at remaining antennas. Thus, the maximum
achievable diversity gain in the relay-destination link is
$\lfloor\frac{M}{J}\rfloor(N-J+1)$. The overall diversity gain of
\TDMA-ICRec is thus upperbounded by the minimum of the two values,
and Theorem \ref{thm-TDMAICRec2} shows that \TDMA-ICRec achieves
this upperbound. When $M=2J$ and $M=4J$, the following corollary is
obtained for special cases of Theorem \ref{thm-TDMAICRec2}.

\begin{corollary}\label{cor-TDMAICRec2}
In the $1_J\times (2J)_1\times N_1$ MARN, \TDMA-ICRec achieves a
diversity gain of $2\min\{J, N-J+1\}$; In the $1_J\times
(4J)_1\times N_1$ MARN, \TDMA-ICRec achieves a diversity gain of
$4\min\{J, N-J+1\}$.
\end{corollary}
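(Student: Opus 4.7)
The plan is to derive this corollary as a direct specialization of Theorem \ref{thm-TDMAICRec2}, which states that in a general $1_J\times M_1\times N_1$ MARN, \TDMA-ICRec achieves a diversity gain of $\min\{M,\lfloor M/J\rfloor(N-J+1)\}$. The two stated cases correspond to $M=2J$ and $M=4J$, for which the floor term simplifies exactly, so no new analysis is needed beyond arithmetic substitution and factoring.

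First I would handle the $1_J\times (2J)_1\times N_1$ case. Setting $M=2J$ gives $\lfloor M/J\rfloor = \lfloor 2J/J\rfloor = 2$, so Theorem \ref{thm-TDMAICRec2} yields a diversity gain of $\min\{2J,\,2(N-J+1)\}$. Since the common factor $2$ is positive, it can be pulled out of the minimum to obtain $2\min\{J,\,N-J+1\}$, which is the first claim. Then for the $1_J\times (4J)_1\times N_1$ case, setting $M=4J$ gives $\lfloor M/J\rfloor = 4$, and Theorem \ref{thm-TDMAICRec2} yields $\min\{4J,\,4(N-J+1)\} = 4\min\{J,\,N-J+1\}$, which is the second claim.

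There is essentially no obstacle here: the only subtlety is verifying that $M=2J$ and $M=4J$ make the floor operation exact (so that the multiplicative factor is exactly $2$ or $4$ with no rounding loss), and this is immediate. All the genuine diversity-analysis work — the outage probability calculation, the handling of the coloured noise in \eqref{eq-noisecov2}, the splitting into Alamouti subsystems via the IC matrix, and the application of the harmonic-mean lemma — is already carried out in the proof of Theorem \ref{thm-TDMAICRec2} in Appendix \ref{ap-thm2}, so the corollary can be stated as an immediate consequence of that theorem.
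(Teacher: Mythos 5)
Your proposal is correct and matches the paper's own treatment: the corollary is stated there precisely as the $M=2J$ and $M=4J$ specializations of Theorem \ref{thm-TDMAICRec2}, with the same substitution $\lfloor M/J\rfloor=2$ or $4$ and factoring the constant out of the minimum (indeed, Appendix \ref{ap-thm2} proves exactly these two cases first and records $\min\{M,2(N-J+1)\}=2\min\{J,N-J+1\}$ and $\min\{M,4(N-J+1)\}=4\min\{J,N-J+1\}$). No further analysis is needed.
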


\subsection{Discussion}\label{subsec-comp}
In this subsection, we discuss several properties of \TDMA-ICRec,
including the constraint on the number of sources, the condition to
achieve the int-free diversity gain, and the symbol rate. The
comparison of \TDMA-ICRec with other linear schemes is also
presented. Finally, we provide an application in the interference
relay network.

First, we discuss the constraint on the number of sources for
\TDMA-ICRec. Since in this protocol, each source is allocated a
different set of relay antennas for the concurrent transmission in
the relay-destination link, the number of relay antennas needs to be
no less than the number of sources, i.e., $M\ge J$. At the
destination, to fully decouple signals of different sources, at
least $J-1$ antennas are required to cancel the symbols of $J-1$
sources. In other words, $N\ge J$. Therefore, $J \le \min\{M, N\}$
is required. This condition is the same as that for \DSTC-ICRec as
discussed in Subsection \ref{subsec-dis}. To guarantee this
condition, user admission control in the upper layer is needed.

Next, we show that \TDMA-ICRec has the potential to achieve the
int-free diversity gain. Recall that the int-free diversity gain is
defined as the maximum achievable diversity gain when there is no
interference in both links. For $1_J\times M_1\times N_1$ MARNs, the
int-free diversity gain is $\min\{M, MN\}=M$, achievable by
\full-TDMA-DSTC\cite{DSTC-mulpaper}. Theorem \ref{thm-TDMAICRec2}
indicates that when \be\label{eq-intcond} N\ge
\frac{M}{\lfloor\frac{M}{J}\rfloor}+J-1,\ee \TDMA-ICRec achieves the
int-free diversity gain of $M$. Eq.~\eqref{eq-intcond} is called
\emph{the int-free condition}. For networks satisfying the int-free
condition, \TDMA-ICRec allows multi-source transmission in the
relay-destination link without sacrificing the diversity gain. If
$M$ is a multiple of $J$, this condition can be further simplified
as $N \ge 2J-1$. Examples of networks satisfying the int-free
condition are: $1_2\times 2_1\times 3_1$, $1_2\times 4_1\times 3_1$,
$1_3\times 3_1\times 5_1$, and $1_3\times 6_1\times 5_1$ MARNs.

In what follows, we discuss the symbol rate of \TDMA-ICRec. In
$1_J\times M_1\times N_1$ MARNs, for each source to transmit a
vector of $2^n$ symbols~($2^n$ is the minimum number no less than
$\lfloor\frac{M}{J}\rfloor$), $2^n J$ channel uses are needed in the
first link and $2^n$ channel uses are needed in the second link.
Thus, $2^n$ symbols from each source are transmitted using
$2^n(J+1)$ channel uses from end to end. The symbol rate can thus be
calculated as $R=\frac{2^n}{2^n(1+J)}=\frac{1}{1+J}$
symbols/source/channel use.

\TDMA-ICRec fits the linear framework in which the relay linearly
transforms its received signals to generate output signals without
decoding and the destination decouples signals from different
sources to separately decode each source's information. In what
follows, we compare the diversity gain, symbol rate, and CSI
requirements at the relay of \TDMA-ICRec with the two existing
linear schemes, \full-TDMA-DSTC and \IC-Relay-TDMA, as well as
\DSTC-ICRec. The results are shown in Table \ref{table-comp}. For
MARNs satisfying the int-free condition, \TDMA-ICRec and
\full-TDMA-DSTC achieve the maximum int-free diversity gain, higher
than that of \IC-Relay-TDMA. Both \TDMA-ICRec and \IC-Relay-TDMA
achieve higher symbol rate compared to \full-TDMA-DSTC. Thus,
\TDMA-ICRec outperforms \full-TDMA-DSTC in terms of symbol rate and
exceeds \IC-Relay-TDMA in terms of diversity gain. For \TDMA-ICRec
and \IC-Relay-TDMA, the relay needs to know its channels with all
sources, which can be obtained by training. For MARNs not satisfying
the int-free condition, \TDMA-ICRec may not achieve the int-free
diversity. For the two proposed protocols, each has its advantage
over the other: \TDMA-ICRec achieves a higher diversity gain,
whereas \DSTC-ICRec has a higher symbol rate.

\TDMA-ICRec can also be applied to the interference relay network
with one multi-antenna relay and several multi-antenna destinations.
The relay processes its received signals in the same way as that in
the MARN. Each destination cancels the information of undesired
sources and decodes the information of its interest as long as the
numbers of antennas at each destination and the relay are no less
than the number of sources.

\section{Numerical Results}\label{sec-Simulation}
In this section, we present simulated BERs of \DSTC-ICRec and
\TDMA-ICRec and compare with the BERs of other existing schemes with
similar complexities and CSI requirements. Since the average power
constraints at all nodes are equal to $P$ and noises are normalized,
the average transmit SNR at each node is $P$. For all figures, the
horizontal axis represents the average transmit SNR, measured in dB;
the vertical axis represents the BER.

In Fig.~\ref{fig-DSTCICRec-BER}, the BER of the first proposed
scheme, \DSTC-ICRec, is demonstrated for 6 MARNs: $1_2\times
2_1\times 2_1$, $1_2\times 2_1\times 3_1$, $1_2\times 2_1\times
4_1$, $1_2\times 4_1\times 2_1$, $1_2\times 4_1\times 3_1$,
$1_2\times 4_1\times 4_1$, and $1_3\times 4_1\times 3_1$. BPSK
modulation is used. Fig.~\ref{fig-DSTCICRec-BER} shows that the
scheme achieves a diversity gain of 1 in the $1_2\times 2_1\times
2_1$, $1_2\times 2_1\times 3_1$, and $1_2\times 2_1\times 4_1$
MARNs. Additional array gain can be achieved when the number of
destination antennas is increased. In the $1_3\times 4_1\times 3_1$
MARN, the diversity gain is 2; while in the $1_2\times 4_1\times
2_1$, $1_2\times 4_1\times 3_1$, and $1_2\times 4_1\times 4_1$
MARNs, the diversity gain is slightly less than 3. This is because
of the $\log P$ factor in the error rate formula\cite{DSTC-OD}. As $P$ increases, the diversity gain approaches 3. 
These results justify the validity of the diversity upperbound
presented in Theorem \ref{thm-DSTCICRec1} and show the achievability
of the upperbound for these network scenarios. Comparing the results
for the $1_3\times 4_1 \times 3_1$ and $1_2\times 2_1 \times 3_1$
MARNs, we can see that the number of sources that a MARN can
accommodate and the diversity gain of a MARN can be improved
simultaneously by increasing the number of relay antennas. From the
results for the $1_2\times 4_1\times 3_1$ and $1_3\times 4_1\times
3_1$ MARNs, we conclude that with a fixed number of relay antennas,
the diversity gain decreases as the number of sources in the network
increases.

Fig.~\ref{fig-TDMAICRec-BER} exhibits the BER of the second proposed
scheme \TDMA-ICRec in 8 MARNs: $1_2\times 2_1\times 2_1$, $1_2\times
2_1\times 3_1$, $1_2\times 2_1\times 4_1$, $1_3\times 3_1\times
3_1$, $1_3\times 3_1\times 5_1$,
$1_2\times 4_1\times 2_1$, $1_2\times 4_1\times 3_1$, and $1_2\times 8_1\times 2_1$. 
In all scenarios, BPSK modulation is used. For MARNs with parameters
$1_2\times 2_1\times 2_1$, $1_3\times 3_1\times 3_1$, $1_2\times
4_1\times 2_1$, and $1_2\times 8_1\times 2_1$, \TDMA-ICRec achieves
the diversity gains of 1, 1, 2, and 4, respectively. Note that the
int-free diversity gains in these networks are 2, 3, 4, and 8,
respectively. \TDMA-ICRec does not achieve the int-free diversity
gain for these scenarios. For MARNs with parameters $1_2\times
2_1\times 3_1$, $1_2\times 2_1\times 4_1$, $1_3\times 3_1\times
5_1$, and $1_2\times 4_1\times 3_1$, \TDMA-ICRec achieves the
diversity gains of 2, 2, 3, and 4, respectively, which are the
int-free diversity gains. The parameters of these four networks
satisfy the int-free condition given in Eq.~\eqref{eq-intcond}. The
simulation results for these eight networks justify our diversity
result in Theorem \ref{thm-TDMAICRec2}.

In the following, we compare the proposed \DSTC-ICRec (Scheme 1) and
\TDMA-ICRec (Scheme 2) with other schemes: \IC-Relay-TDMA (Scheme
3), \full-TDMA-DSTC (Scheme 4), \TDMADFICRec (Scheme 5), and \joint
(Scheme 6). Schemes 3 and 4 are introduced in Section 1. To compare
our methods with schemes having decoding at the relay, Scheme 5 is
introduced. It is similar to Scheme 2 except that the relay conducts
the ML decoding based on the soft estimate in \eqref{eq-relayMRCdb}.
After that, symbols are re-encoded and forwarded to the destination
using the same constellation. Scheme 6 is similar to Scheme 1, but
in Scheme 6 the destination jointly decodes all sources' information
without IC. Note that Schemes 1,2,3,4 satisfy the constraints of the
linear framework, but Schemes 5 and 6 do not fit the linear
framework and have higher complexity than the other four schemes.
For fair comparison in the numerical experiments, we fix the bit
rate to be 1 bit/source/channel use regardless of the scheme and
plot the BERs of the schemes as a function of the average transmit
SNR. Thus, QPSK, 8PSK, 8PSK, 16PSK, 8PSK, and QPSK are used for
Schemes 1, 2, 3, 4, 5, and 6, respectively.

Figs.~\ref{fig-comp1}, \ref{fig-comp2}, and \ref{fig-comp3} show
BERs of these schemes in the $1_2\times 2_1\times 2_1$, $1_2\times
2_1\times 3_1$, $1_2\times 4_1\times 3_1$ MARNs, respectively. We
compare the BERs of the four linear schemes. We first look at the
$1_2\times 2_1\times 2_1$ MARN whose BERs are shown in
Fig.~\ref{fig-comp1}. Only Scheme 4 achieves the maximum int-free
diversity, thus it has the best performance at high SNR (26 dB and
up). The other three schemes have a diversity gain of 1. Scheme 3
has the lowest BER for SNR less than 26 dB, because of its high
signal to interference-plus-noise ratio (SINR) at the destination.
Thus, for the $1_2\times 2_1\times 2_1$ MARN, the proposed schemes,
Schemes 1 and 2, are inferior in BER. The next is the $1_2\times
2_1\times 3_1$ MARN. We can see from Fig.~\ref{fig-comp2} that only
Schemes 2 and 4 achieve the maximum int-free diversity, while Scheme
2 has lower BER than the other schemes for all the simulated SNR
values. Its advantage over Scheme 4 is about 5~dB. This is because
Scheme 2 has a higher symbol rate. For the same bit rate, it can use
a smaller constellation, which provides higher array gain. Thus, for
the $1_2\times 2_1\times 3_1$, Scheme 2 is the best. Finally, in the
$1_2\times 4_1\times 3_1$ MARN shown in Fig.~\ref{fig-comp3}, Scheme
2 has the highest diversity gain. When SNR is higher than 23~dB,
Scheme 2 has the lowest BER. Scheme 1 outperforms the other three in
the SNR regime from 17 to 23~dB, because it has the highest symbol
rate and uses the smallest constellation to achieve the same bit
rate. When the SNR is smaller than 17~dB, Scheme 3 has the lowest
BER. Therefore, for the $1_2\times 4_1\times 3_1$ MARN, our proposed
two schemes have lower BER compared to the existing schemes when SNR
is higher than 17~dB. We can conclude from the three experiments
that the relative quality of the four schemes depends on the network
parameters and SNR range. The proposed Scheme 1, \DSTC-ICRec, is
expected to have good reliability in the low to moderate SNR range,
as observed in Fig.~\ref{fig-comp3}. The proposed Scheme 2,
\TDMA-ICRec, is expected to have good reliability for MARNs whose
relay-destination link is much stronger than the source-relay link
(e.g., the $1_2\times 2_1\times 3_1$ MARN). These are due to the
nature of the design explained in Section \ref{sec-DSTC-ICRec} and
Section \ref{sec-TDMA-ICRec}.

In what follows, we compare the proposed schemes with Schemes 5 and
6, which do not satisfy the linear constraints. We first compare
Scheme 1 with Scheme 6. Note that the ML decoding of Scheme 1 is
symbol-wise in the $1_2\times 2_1\times N_1$ MARN and pair-wise in
the $1_2\times 4_1\times N_1$ MARN, while for Scheme 6, the
destination needs to jointly decode four symbols in the $1_2\times
2_1\times N_1$ MARN and eight symbols in the $1_2\times 4_1\times
N_1$ MARN. For networks with large $J$ and $M$, the decoding
complexity of Scheme 6 is exponential in $JM$, thus becomes
impractical. For Scheme 1, the decoding complexity is linear in $J$
and exponential in $M/2$, thus is much lower. Figs.~\ref{fig-comp1},
\ref{fig-comp2}, and \ref{fig-comp3} show that this extra decoding
complexity can improve both diversity gain and array gain. The
diversity gain improvements are 1 in all three networks. For the
$1_2\times 4_1\times 3_1$ MARN~(Fig.~\ref{fig-comp3}), the array
gain improvement is the smallest~(about 3~dB at BER$=10^{-2}$)
compared to the other two networks. Therefore, compared to Scheme 6,
Scheme 1 is desired in large networks to trade performance
degradation for lower complexity. Then, we compare Scheme 2 with
Scheme 5 and see whether the extra decoding at the relay can provide
better performance. For the three networks shown in
Figs.~\ref{fig-comp1}, \ref{fig-comp2}, and \ref{fig-comp3}, Scheme
2 has the same diversity gain as Scheme 5. For the $1_2\times
2_1\times 2_1$ MARN (Fig.~\ref{fig-comp1}), Scheme 2 has
approximately the same performance as Scheme 5 for all the simulated
SNR values. For the $1_2\times 2_1\times 3_1$ MARN
(Fig.~\ref{fig-comp2}), Scheme 2 has the same performance as Scheme
5 in the high SNR regime while is about 1~dB worse in the low to
moderate SNRs. For the $1_2\times 4_1\times 3_1$ MARN
(Fig.~\ref{fig-comp3}), Scheme 2 is approximately 2~dB worse for all
SNRs. These observations can be explained as follows. For the
$1_2\times 2_1\times 2_1$ MARN, with Scheme 2, the BER of the
network is mainly constrained by the second hop (with IC at the
destination, the second hop has only a diversity gain of 1, but the
first hop has a diversity gain of 2). The extra decoding at the
relay only improves the performance in the first hop, does not help
the overall performance much. For the $1_2\times 2_1\times 3_1$ and
$1_2\times 4_1\times 3_1$ MARNs, with Scheme 2, the two links have
similar qualities (both links have diversity gain 2 for the
$1_2\times 2_1\times 3_1$ MARN and 4 for the $1_2\times 4_1\times
3_1$ MARN). The extra decoding complexity at the relay can provide a
better performance.

\section{Conclusions}\label{sec-Conclusion}
This paper studies multi-source transmission schemes for $1_J\times
M_1\times N_1$ MARNs. For complexity considerations, a linear
framework is introduced, where the relay conducts linear
transformation without decoding and the destination decouples
signals from different sources so that the decoding complexity is
linear in the number of sources. We propose two protocols that use
multi-antennas at the destination to resolve multi-source
interference. The protocol of \DSTC-ICRec allows concurrent
transmission of information streams from multi-sources in both the
source-relay link and the relay-destination link. The relay performs
DSTC and does not require any CSI. The destination uses the
multi-antenna IC technique to decouple signals from different
sources. \DSTC-ICRec achieves a symbol rate of $1/2$
symbols/source/channel use, but its diversity gain is shown to be
upperbounded by $M-J+1$. Thus, for this protocol, the diversity gain
degradation is necessary to trade for symbol rate. To improve the
diversity gain, we propose \TDMA-ICRec, in which concurrent
transmission is allowed in the relay-destination link but TDMA is
used in the source-relay link. After receiving signals from the
sources, the relay first conducts MRC to maximize the SNR of each
source then concurrently transmits all sources' information to the
destination using DSTC. At the destination, IC is performed to
decouple signals from different sources before decoding. Through
analysis and simulations, it is shown that \TDMA-ICRec achieves a
diversity gain of $\min\left\{M,
\lfloor\frac{M}{J}\rfloor(N-J+1)\right\}$ with a symbol rate of
$\frac{1}{J+1}$. When $N \ge 2J-1$, \TDMA-ICRec achieves the same
maximum int-free diversity gain of the network but with a higher
symbol rate, compared to a full TDMA scheme.

\section*{Appendix}
\appendix
\section{Proof of Theorem \ref{thm-DSTCICRec1}}\label{ap-thm1} To
prove this theorem, it suffices to find an upperbound on the
instantaneous normalized receive SNR. We first show the scenario of
$J=2$ and $M=2$, then its generalization.

From \eqref{eq-DSTCICRec-ML}, the noise covariance matrix
$\mb{R}_\mb{n}$ can be lowerbounded by $\mb{R}_\mb{n}\succ
\mb{B}\mb{B}^*$. Then, the instantaneous normalized receive SNR for
$s_1^{(1)}$ can be upperbounded as \be\label{eq-SNRup}
\gamma=(\mb{B}\tilde{\mb{G}}\mb{\Phi}
\hat{\mb{f}}_1^{(1)})^*\mb{R}_\mb{n}^{-1}(\mb{B}\tilde{\mb{G}}\mb{\Phi}
\hat{\mb{f}}_1^{(1)})<\hat{\mb{f}}_1^{(1)*}\mb{\Phi}^*\tilde{\mb{G}}^*\mb{B}^*(\mb{B}\mb{B}^*)^{-1}\mb{B}\tilde{\mb{G}}\mb{\Phi}
\hat{\mb{f}}_1^{(1)}<\hat{\mb{f}}_1^{(1)*}\mb{\Phi}^*\tilde{\mb{G}}^*\tilde{\mb{G}}\mb{\Phi}
\hat{\mb{f}}_1^{(1)}, \ee  where $\hat{\mb{f}}^{(j)}_i$ denotes the
$i$-th column of $\mb{F}^{(j)}$. For the second inequality we have
used the fact that $\mb{B}^*(\mb{BB}^*)^{-1}\mb{B}\prec
\mb{I}_{2N}$. Since $\hat{\mb{f}}_1^{(1)}$ is orthogonal to
$\hat{\mb{f}}_2^{(2)}$ from \eqref{eq-remaining2}, the projection
$\mb{\Phi}\hat{\mb{f}}_1^{(1)}$ is equivalent to project
$\hat{\mb{f}}_1^{(1)}$ onto the null space of $\hat{\mb{f}}_1^{(2)}$
only , i.e.,
$\mb{\Phi}\hat{\mb{f}}_1^{(1)}=\mb{\Xi}\hat{\mb{f}}_1^{(1)},$ where
{\small$\mb{\Xi}=\left(\mb{I}_4-\frac{\hat{\mb{f}}_1^{(2)}\hat{\mb{f}}_1^{(2)*}}{\|\hat{\mb{f}}_1^{(2)}\|^2}\right)$}.
Note that $\tilde{\mb{G}}^*\tilde{\mb{G}}\prec
\tr\left(\tilde{\mb{G}}^*\tilde{\mb{G}}\right)\mb{I}_4$. The
right-hand side (RHS) of \eqref{eq-SNRup} can be further
upperbounded by \be\label{eq-divproof} \gamma<\tr
(\tilde{\mb{G}}^*\tilde{\mb{G}})
\hat{\mb{f}}_1^{(1)*}\mb{\Xi}^*\mb{\Xi}
\hat{\mb{f}}_1^{(1)}=2\underset{g}{\underbrace{\sum_{n=1:N}\left(|g_{1n}|^2+|g_{2n}|^2\right)}}\hat{\mb{f}}_1^{(1)*}\mb{\Xi}
\hat{\mb{f}}_1^{(1)}=2g\underset{f}{\underbrace{\mb{f}^{(1)*}\mb{\Theta}
\mb{f}^{(1)}}},\ee where the first equality holds because
$\mb{\Xi}^*\mb{\Xi}=\mb{\Xi}$ from the definition of projection;
$\mb{f}^{(j)}$ is a $2\times 1$ channel vector from Source $j$ to
the relay, i.e., $\mb{f}^{(j)}= \left[f_1^{(j)}\
f_2^{(j)}\right]^t$; and
{\small$\mb{\Theta}=\mb{I}_2-\frac{{\mb{f}}^{(2)}\mb{f}^{(2)*}}{\|\mb{f}^{(2)}\|^2}$},
a $2\times 2$ projection matrix to the null space of
${\mb{f}}^{(2)}$. Clearly, the random variable $g$ is Gamma
distributed with degree $2N$. Next, we show that given
$\mb{f}^{(2)}$, $f$ is Gamma distributed with degree 1. Note that
the two eigenvalues of $\mb{\Theta}$ are 1 and 0. Then,
${\mb{f}}^{(1)*}\mb{\Theta}
{\mb{f}}^{(1)}={\mb{f}}^{(1)*}\mb{u}_1\mb{u}_1^* {\mb{f}}^{(1)}$,
where $\mb{u}_1$ is the eigenvector corresponding to the eigenvalue
1. Since $\mb{\Theta}$ depends on $\mb{f}^{(2)}$ only, $\mb{u}_1$ is
independent of ${\mb{f}}^{(1)}$. Thus, given $\mb{f}^{(2)}$,
$\mb{u}_1^*{\mb{f}}^{(1)}$ is $\mc{CN}(0,1)$ distributed and $f$ is
Gamma distributed with degree 1. The outage probability of $\gamma$
can be bounded as {\setlength{\arraycolsep}{1pt}\begin{eqnarray*}
P(\gamma<\epsilon)&=&\underset{\mb{f}^{(2)},g}{\Exp}\left[P\left(\gamma<\epsilon|\mb{f}^{(2)},g\right)\right]>\underset{\mb{f}^{(2)},g}{\Exp}\left[P\left(2gf<\epsilon|\mb{f}^{(2)},g\right)\right]=\underset{\mb{f}^{(2)},g}{\Exp}\left[P\left(f<\frac{\epsilon}{2g}|\mb{f}^{(2)},g\right)\right]\\
&=&\underset{\mb{f}^{(2)},g}{\Exp}\left[\alpha\frac{\epsilon}{2g}\right]+o(\epsilon)=\alpha\underset{g}{\Exp}\left[\frac{\epsilon}{2g}\right]+o(\epsilon)=\frac{\alpha}{2(2N-1)}\epsilon+o(\epsilon).
\end{eqnarray*}}
\hspace{-2pt}where $\alpha$ is a constant independent of $\epsilon$.
By \eqref{eq-div}, the diversity gain is upperbounded by one.

For MARNs with general $J$ and $M$, the IC operation at the
destination similarly creates a virtual ZF operation at the relay as
discussed in Subsection \ref{subsec-equivalent}. The virtual ZF
matrix $\mb{\Phi}$ nulls out $\mb{F}^{(j)},\ j=2,\ldots, J$.
Following a similar process, it can be shown that the instantaneous
normalized receive SNR is upperbounded by a product of two parts as
in \eqref{eq-divproof}: the first part depends on $g_{in}$ only, the
second part is equal to $\mb{f}^{(1)*}\mb{\Theta}\mb{f}^{(1)}$ where
$\mb{f}^{(j)}$ is the $M\times 1$ channel vector from Source $j$ to
the relay and $\mb{\Theta}$ is a projection matrix onto the null
spaces of $\mb{f}^{(2)}$ to $\mb{f}^{(J)}$. Similarly, the diversity
gain can be shown to be no higher than $M-J+1$.

\section{Proof of Theorem \ref{thm-TDMAICRec2}}\label{ap-thm2}
We first show the case that $M=2J$, then $M=4J$, followed by the
general scenario. When $M=2J$, the channel vector experienced by
$s^{(1)}_1$ is $\mb{B}\mb{g}_1$ from \eqref{eq-remaindb}, where
$\mb{g}_1$ denotes the first column of $\mb{G}_1$. Note that the
noise covariance matrix is given in \eqref{eq-noisecov2}. By the
definition of instantaneous normalized receive SNR, we have
{\setlength{\arraycolsep}{2pt}\small\begin{align}&\gamma=\mb{g}_1^*\mb{B}^*\left(\frac{c_1^2}{\sum
\left|f_{i}^{(1)}\right|^2}\mb{B}\mb{G}_1\mb{G}_1^*\mb{B}^*+\mb{B}\mb{B}^*\right)^{-1}\mb{B}\mb{g}_1\label{eq-line1}\\
\label{eq-line2}&=\mb{g}_1^*\mb{B}^*\left((\mb{BB}^*)^{-1}-(\mb{BB^*})^{-1}\mb{B}\mb{G}_1\left(\frac{\sum
\left|f_{i}^{(1)}\right|^2}{c_1^2}\mb{I}_2+\mb{G}_1^*\mb{B}^*(\mb{BB}^*)^{-1}\mb{B}\mb{G}_1\right)^{-1}\mb{G}_1^*\mb{B}^*(\mb{BB}^*)^{-1}\right)\mb{B}\mb{g}_1\\
\label{eq-line3}&=\mb{g}_1^*\mb{B}^*(\mb{BB}^*)^{-1}\mb{B}\mb{g}_1-\left(\frac{\sum
\left|f_{i}^{(1)}\right|^2}{c_1^2}+\mb{g}_1^*\mb{B}^*(\mb{BB}^*)^{-1}\mb{B}\mb{g}_1\right)^{-1}\mb{g}_1^*\mb{B}^*(\mb{BB}^*)^{-1}\mb{B}\mb{G}_1\mb{G}_1^*\mb{B}^*(\mb{BB}^*)^{-1}\mb{B}\mb{g}_1\\
\label{eq-line4}&=\mb{g}_1^*\mb{B}^*(\mb{BB}^*)^{-1}\mb{B}\mb{g}_1-\left(\frac{\sum
\left|f_{i}^{(1)}\right|^2}{c_1^2}+\mb{g}_1^*\mb{B}^*(\mb{BB}^*)^{-1}\mb{B}\mb{g}_1\right)^{-1}\left(\mb{g}_1^*\mb{B}^*(\mb{BB}^*)^{-1}\mb{B}\mb{g}_1\right)^2.
\end{align}}
\hspace{-2pt}From \eqref{eq-line1} to \eqref{eq-line2}, the matrix
inversion lemma is applied. For \eqref{eq-line3}, we use the fact
that {$\mb{G}_1^*\mb{B}^*(\mb{BB}^*)^{-1}\mb{B}\mb{G}_1$} is a
Hermitian matrix with Alamouti structure. Thus,
$\mb{G}_1^*\mb{B}^*(\mb{BB}^*)^{-1}\mb{B}\mb{G}_1$ is a $2\times 2$
diagonal matrix whose diagonal entries are equal to
$\mb{g}_1^*\mb{B}^*(\mb{BB}^*)^{-1}\mb{B}\mb{g}_1$. Eq.
\eqref{eq-line4} follows from \eqref{eq-line3} because the second
entry of the vector
$\mb{g}_1^*\mb{B}^*(\mb{BB}^*)^{-1}\mb{B}\mb{G}_1$ is zero. Let
{\small$y=\mb{g}_1^*\mb{B}^*(\mb{BB}^*)^{-1}\mb{B}\mb{g}_1$} and
{\small$x=\underset{i=1:M}{\sum} \left|f_{i}^{(1)}\right|^2$}. We
have \be\label{eq-SNRremain}\gamma=\frac{xy}{x+yc_1^2},\ee which is
a scaled harmonic mean of variables $x$ and $yc_1^2$. Since $x$ is
the sum of $M$ independent random variables with exponential
distribution, $x$ is Gamma distributed with degree $M$. Thus, $x$
has a diversity gain of $M$. For $yc_1^2$, when $P\gg 1$,
$c_1=\underset{P\rightarrow \infty}{\lim}
\sqrt{\frac{P}{MP+M}}\approx \frac{1}{\sqrt{M}}$. From the iterative
algorithm and \eqref{eq-ICmatrix2}, $\mb{B}$ depends on
$\mb{G}_{jn}$ for $j=2, \ldots, J$. On the other hand, $\mb{g}_1$
only depends on $\mb{G}_{1n}$. Thus,
$\mb{B}^*(\mb{BB}^*)^{-1}\mb{B}$ and $\mb{g}_1$ are independent.
Because $\mb{B}$ zero-forces $\mb{G}_2$ to $\mb{G}_J$, it can be
shown that $\mb{B}^*(\mb{BB}^*)^{-1}\mb{B}$ is a projection matrix
onto the null space of the subspace spanned by columns of $\mb{G}_j$
for $j=2,\ldots, J$. Then, $y$ is Gamma distributed with degree
$2(N-J+1)$, implying $c_1^2y$ has diversity gain $2(N-J+1)$. Let
$k=1$ in Lemma 1. The diversity gain of $\gamma$ is the smaller of
the diversities of $x$ and $y$. Then, the achievable diversity gain
of \TDMA-ICRec is $\min\{M, 2(N-J+1)\}=2\min\{J, N-J+1\}$.

For $M=4J$, the instantaneous normalized receive SNR can be shown to
be
$\gamma=\tr\left(\mb{H}^*\mb{R}_\mb{N}^{-1}\mb{H}\right)$\cite{ICRelay-TDMA-jou},
where $\mb{H}$ denotes the equivalent channel matrix and
$\mb{R}_\mb{N}$ denotes the covariance matrix of $\mb{n}$ in
\eqref{eq-remainfour}. After straightforward calculation, we have
{\small\[\mb{R}_\mb{N}=\diag
\left(\frac{2c_2^2}{\sum\left|f^{(1)}_{i}\right|^2}\mb{B}^+\mb{G}^{(1)+}\mb{G}^{(1)+*}\mb{B}^{+*}+\mb{B}^+\mb{B}^{+*},
\frac{2c_2^2}{\sum\left|f^{(1)}_{i}\right|^2}\mb{B}^-\mb{G}^{(1)-}\mb{G}^{(1)-*}\mb{B}^{-*}+\mb{B}^-\mb{B}^{-*}
\right).\]} By similar calculation procedures from \eqref{eq-line1}
to \eqref{eq-line4}, it follows that
\be\label{eq-SNRremainfour}\gamma=\frac{xy}{x+2c_2^2y}+\frac{xz}{x+2c_2^2z},\ee
where $x=\underset{i=1:M}{\sum}\left|f^{(1)}_{i}\right|^2$,
$y=\mb{g}_1^{+*}\mb{B}^{+*}(\mb{B}^+\mb{B}^{+*})^{-1}\mb{B}^+\mb{g}_1^+$,
and
$z=\mb{g}_1^{-*}\mb{B}^{-*}(\mb{B}^-\mb{B}^{-*})^{-1}\mb{B}^-\mb{g}_1^-$,
with $\mb{g}_1^{\star}$ the first column of $\mb{G}_1^{\star}$ for
$\star=+,-$. The random variable $x$ is Gamma distributed with
degree $M$ and has a diversity gain of $M$. The random variables $y$
and $z$ are independent and both have a diversity gain of
$2(N-J+1)$. Then, $y+z$ has a diversity gain of $4(N-J+1)$. Let
$k=2$ in Lemma 1. The achievable diversity gain of \TDMA-ICRec is
$\min\{M, 4(N-J+1)\}=4\min\{J, N-J+1\}$.

For MARNs with a general $M$, the relay encodes the information of
one source using one quasi-orthogonal DSTC with ABBA
structure\cite{QOD,TBH00} and forwards each codeword by
$\lfloor\frac{M}{J}\rfloor$ of its antennas. The destination
conducts the multi-user IC technique\cite{KaJa}. The proof for this
general case is a straightforward extension of the proofs for the
cases that $M=2J$ and $M=4J$. Thus, the diversity result in Theorem
\ref{thm-TDMAICRec2} follows.

{\footnotesize \renewcommand{\baselinestretch}{1}
\bibliographystyle{ieeetran}
\bibliography{IEEEabrv,IC-Relay-TDMA}}

\renewcommand{\baselinestretch}{1.4}
\begin{table}[!ht]
  \centering
  \caption{The diversity gain and symbol rate performance for linear schemes. (The schemes marked with * are proposed in this paper.)}\label{table-comp}
  \begin{tabular}{|c|c|c|c|}
    \hline
    Protocol & Diversity Gain & Symbol Rate & Relay Backward CSI\\
    \hline
    \DSTC-ICRec * & $\le M-J+1$ & $\frac{1}{2}$ & No\\
    \TDMA-ICRec * & $\min\{M, \lfloor\frac{M}{J}\rfloor(N-J+1)\}$ & $\frac{1}{J+1}$ & Yes\\
    \IC-Relay-TDMA & $M-J+1$ & $\frac{1}{J+1}$& Yes\\
    \full-TDMA-DSTC & $M$ & $\frac{1}{2J}$ & No \\
    \hline
  \end{tabular}
\end{table}

%
\begin{figure}[!ht]
\centering
\includegraphics[width=4.5in]{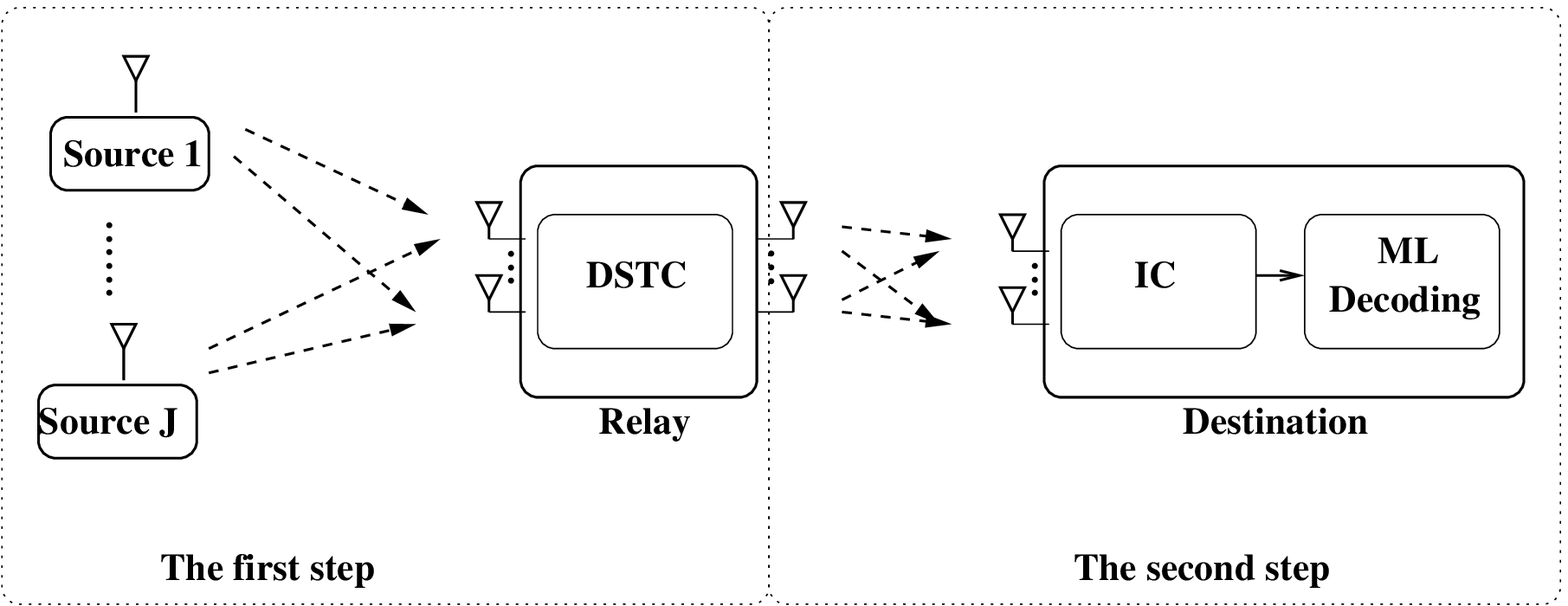}
\caption{System block diagram of \DSTC-ICRec.}
\label{fig-DSTCICRec-block}
\end{figure}
\begin{figure}[!ht]
\centering
\includegraphics[width=4.5in]{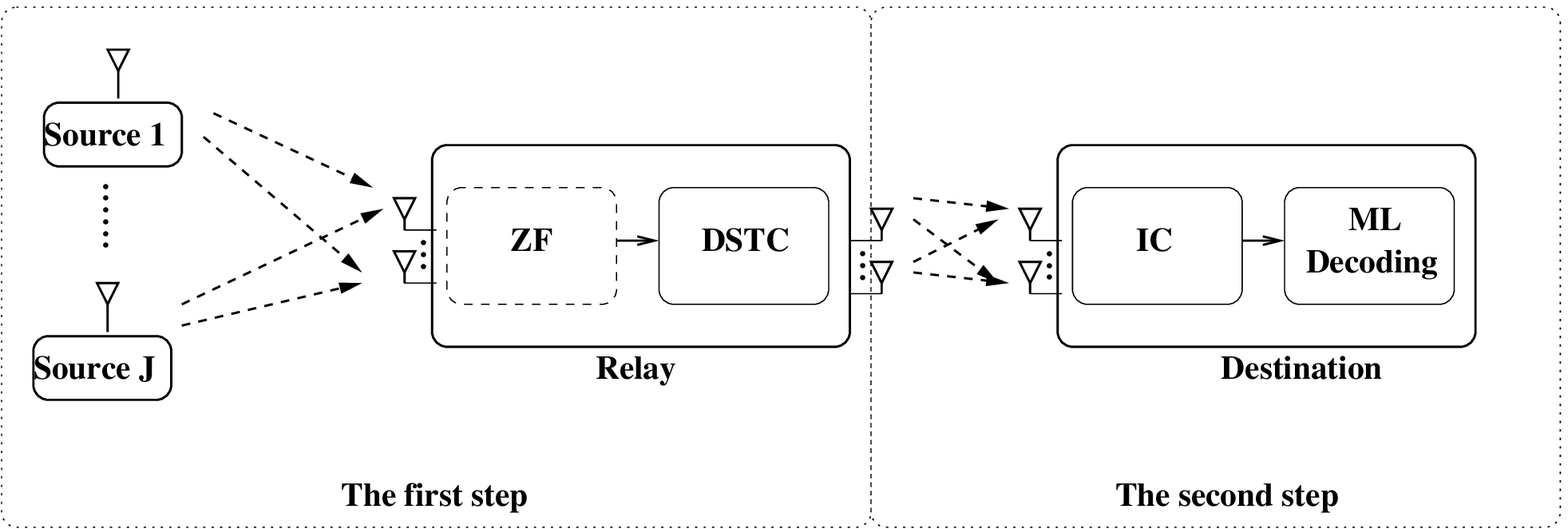}
\caption{Equivalent system of \DSTC-ICRec with zero-forcing at the
relay.} \label{fig-eqsystem}
\end{figure}
%
%
\begin{figure}[!ht]
\centering
\includegraphics[width=4.5in]{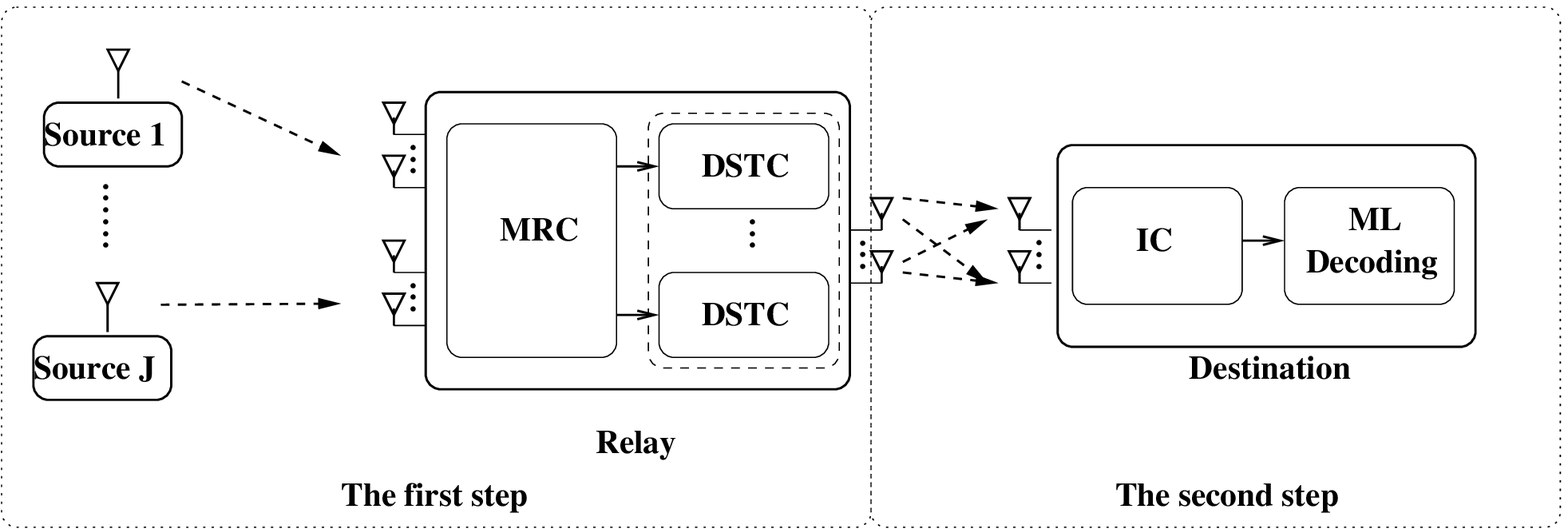}
\caption{System block diagram of \TDMA-ICRec.}
\label{fig-TDMAICRec-block}
\end{figure}
%
%

\begin{figure}[!ht]
\centering
\includegraphics[height=4.5in, angle=-90]{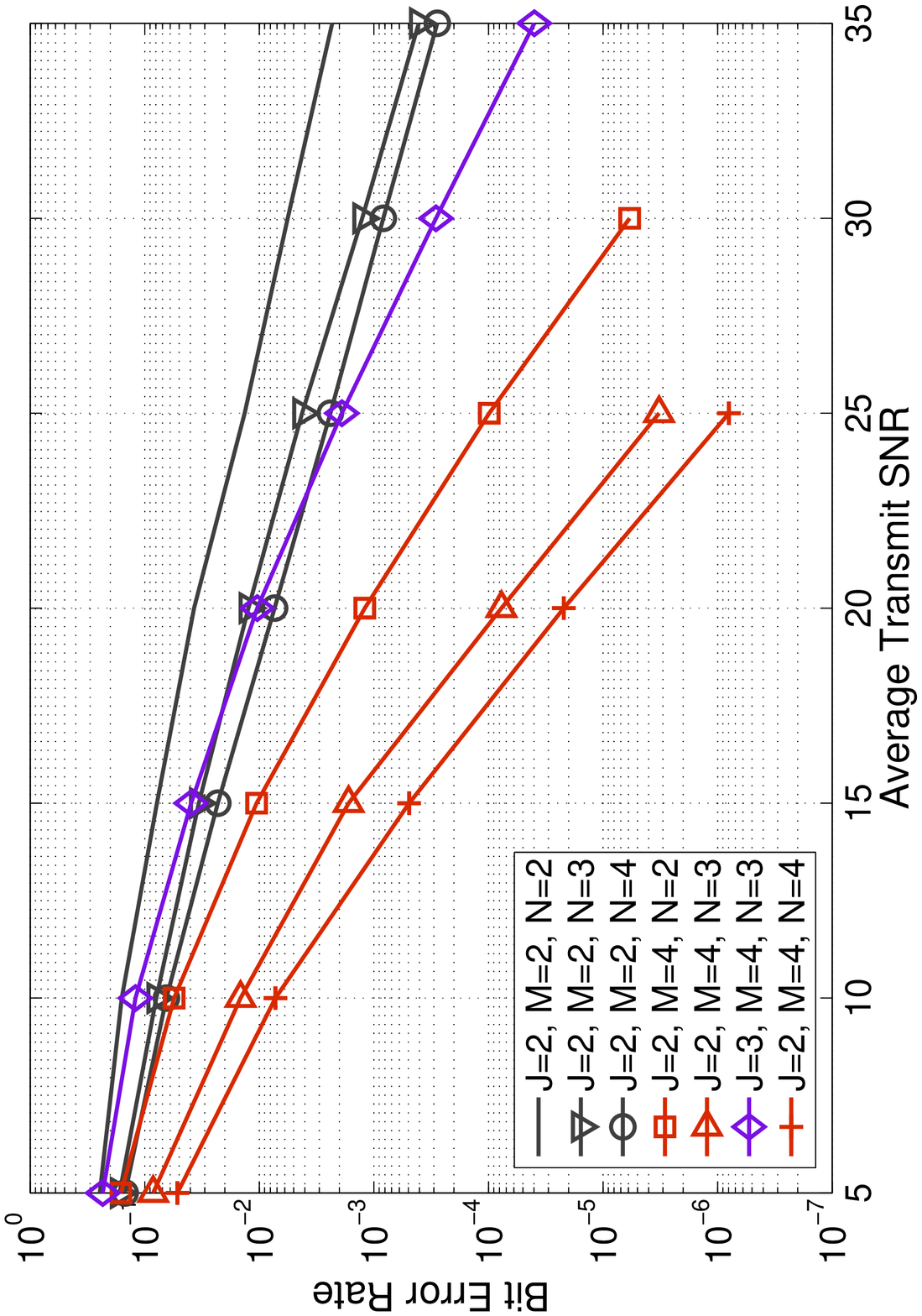}
\caption{BER performance of \DSTC-ICRec, using BPSK modulation.}
\label{fig-DSTCICRec-BER}
\end{figure}

\begin{figure}[!ht]
\centering
\includegraphics[height=4.5in, angle=-90]{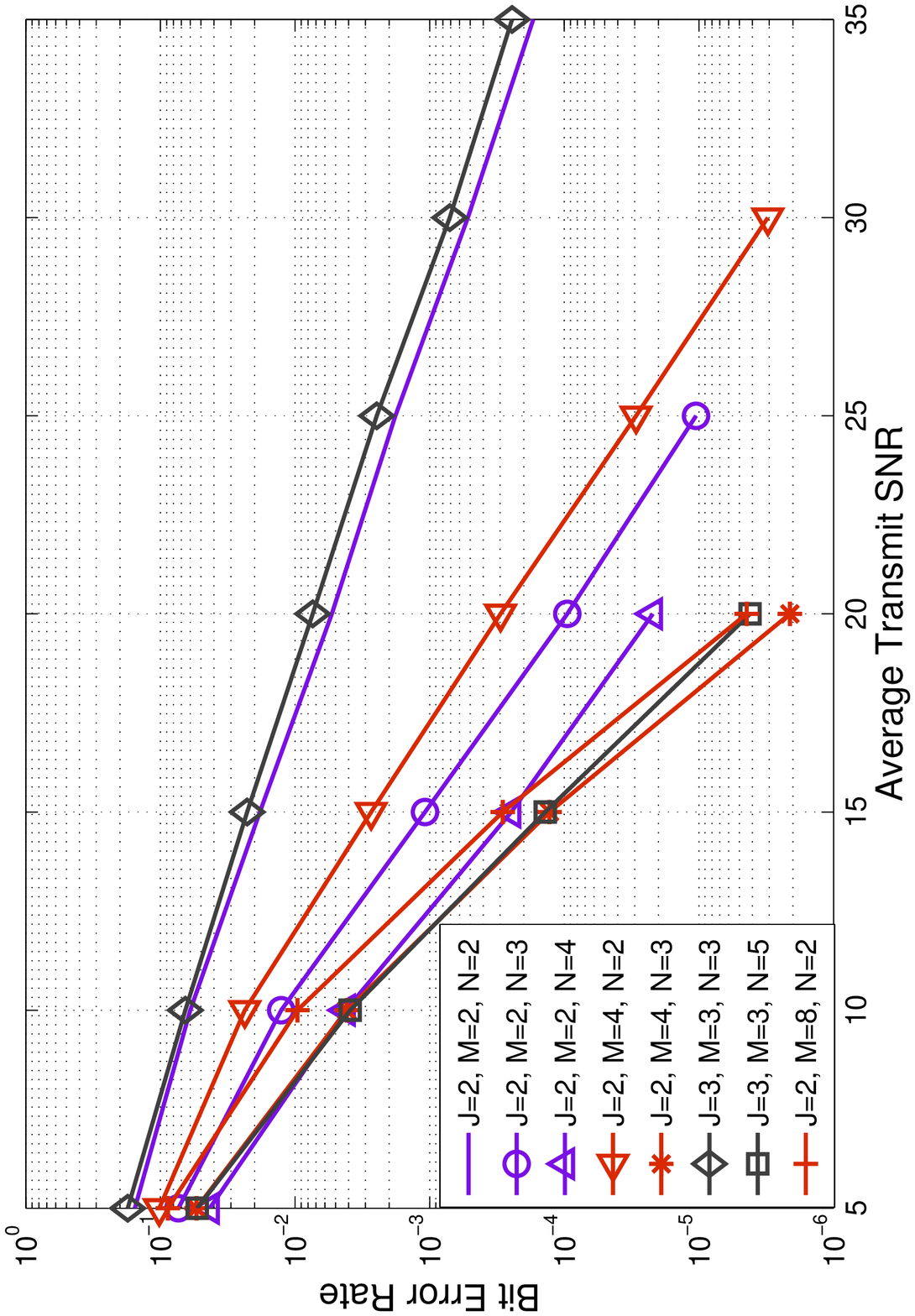}
\caption{BER performance of \TDMA-ICRec, using BPSK modulation.}
\label{fig-TDMAICRec-BER}
\end{figure}

\begin{figure}[!ht]
\centering
\includegraphics[height=4.5in,angle=-90]{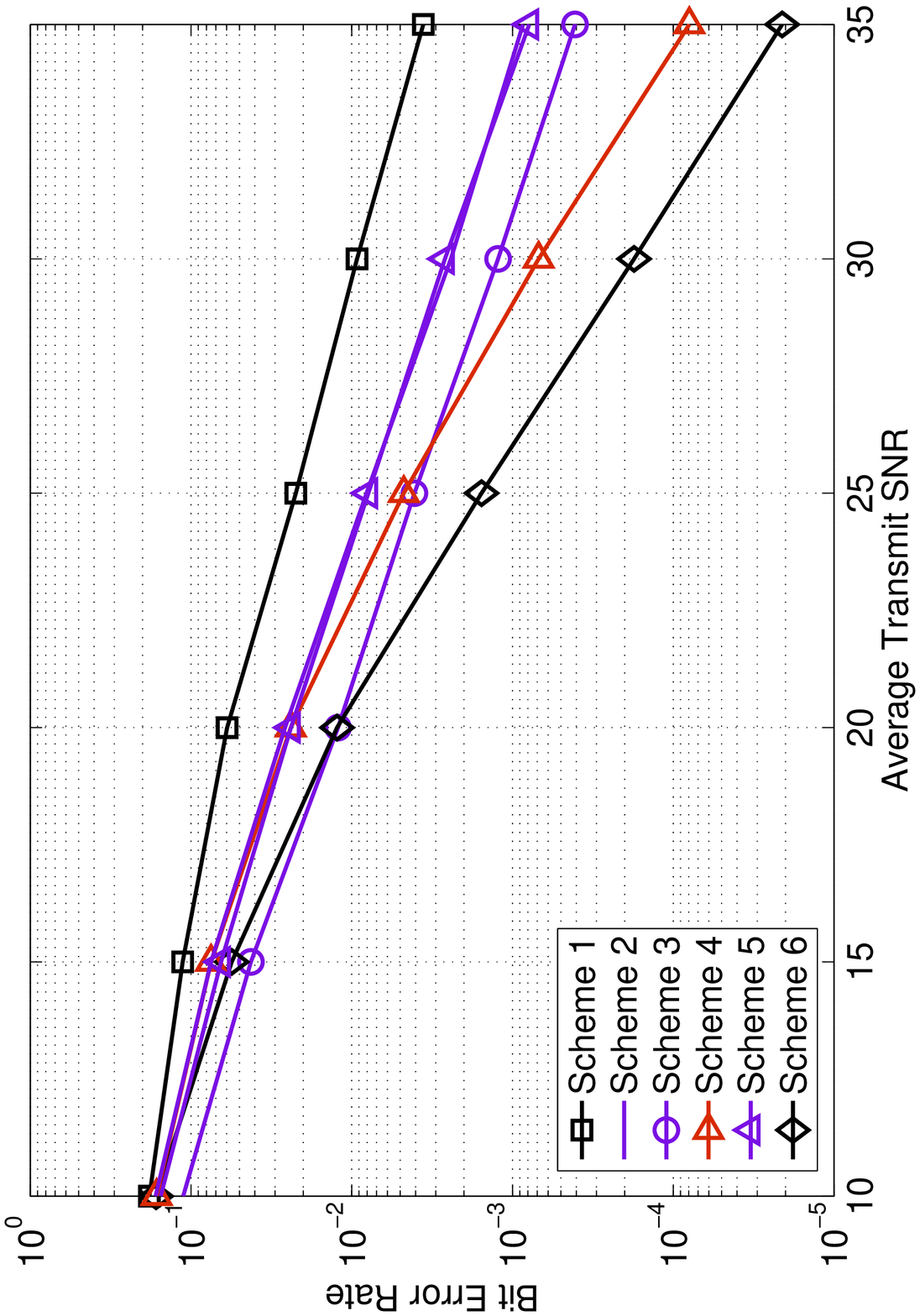}
\caption{Performance comparison in a $1_2\times 2_1\times 2_1$ MARN,
1 bit/source/channel use for all schemes.} \label{fig-comp1}
\end{figure}

\begin{figure}[!ht]
\centering
\includegraphics[height=4.5in,angle=-90]{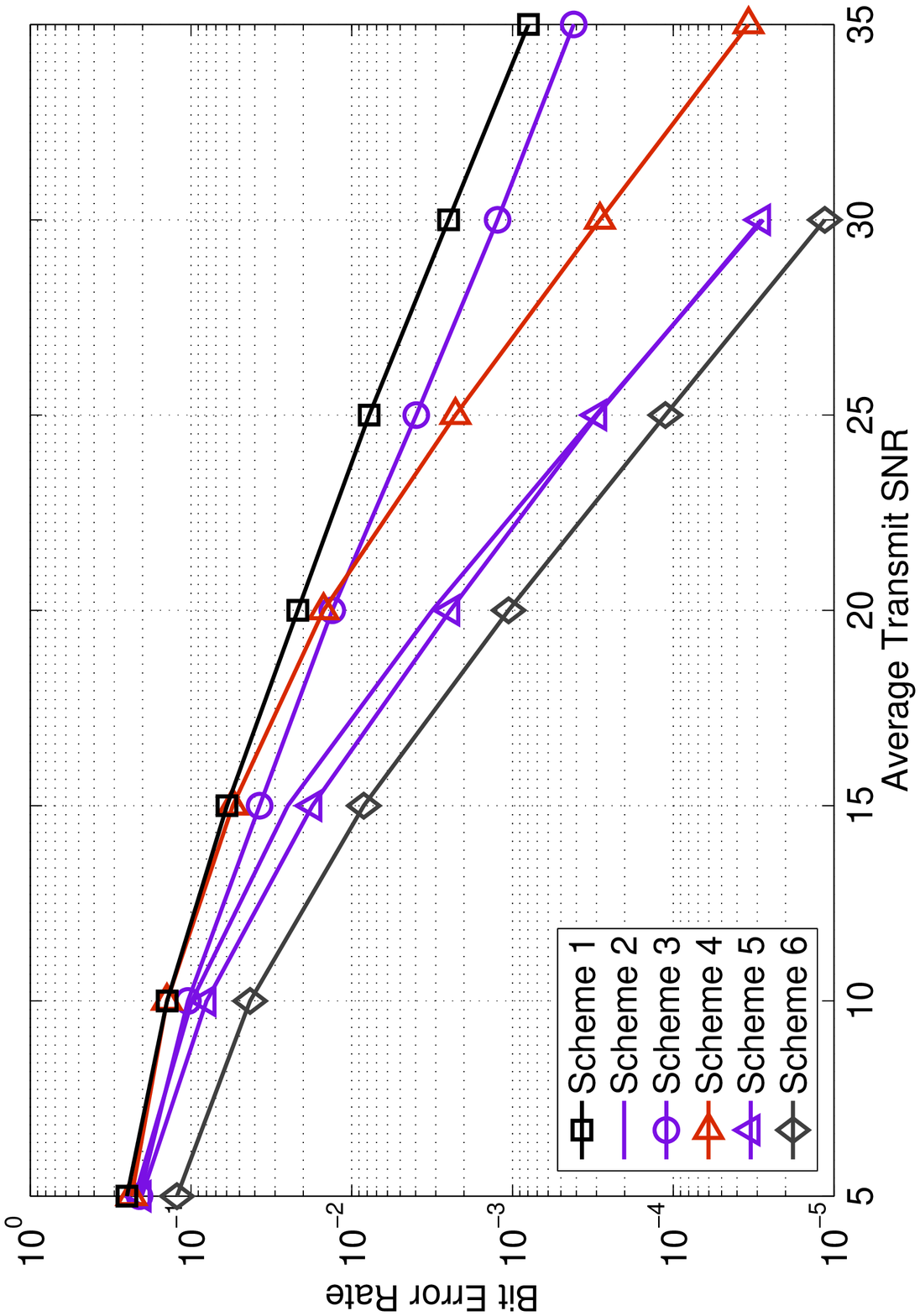}
\caption{Performance comparison in a $1_2\times 2_1\times 3_1$ MARN,
1 bit/source/channel use for all schemes.} \label{fig-comp2}
\end{figure}

\begin{figure}[!ht]
\centering
\includegraphics[height=4.5in,angle=-90]{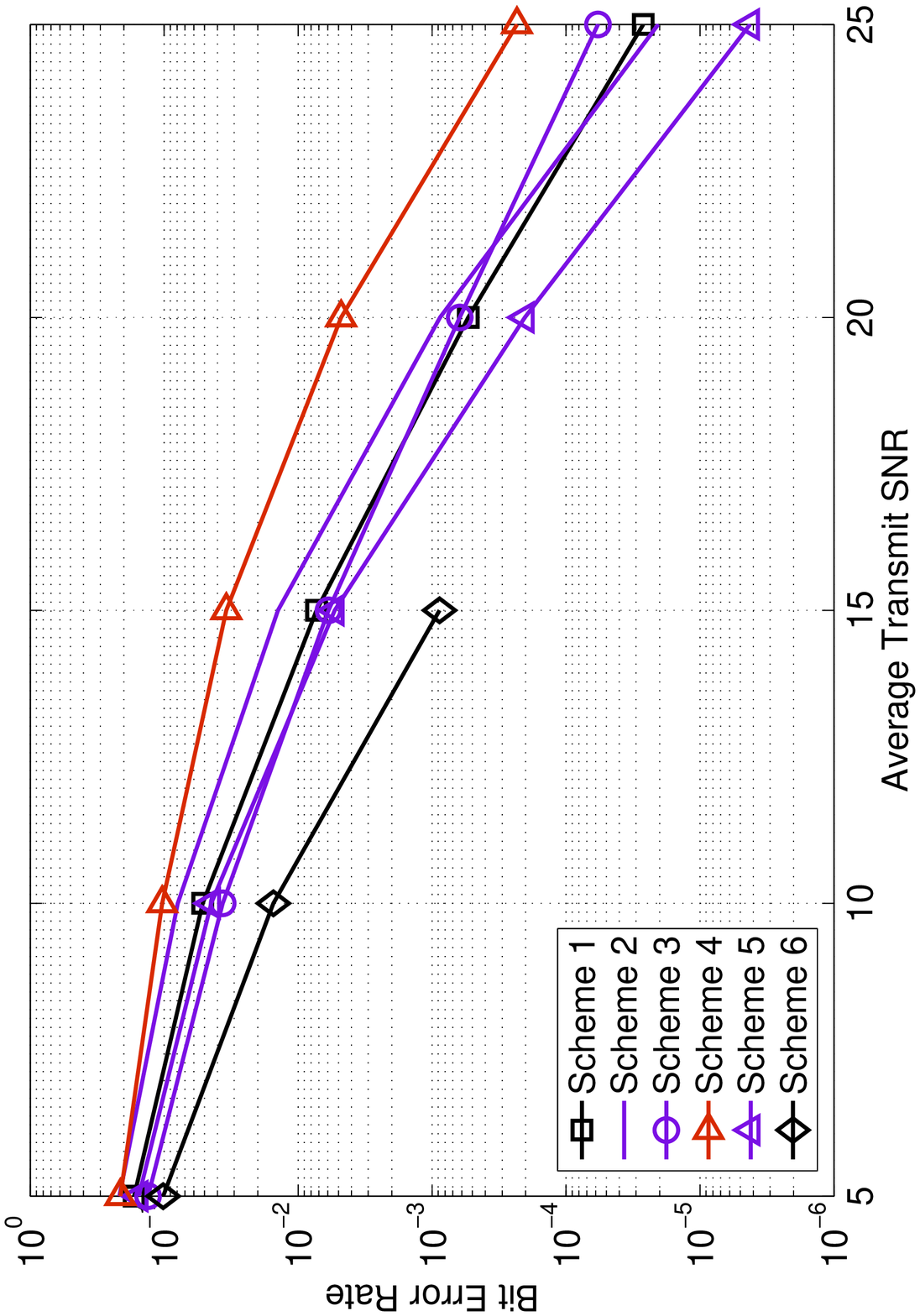}
\caption{Performance comparison in a $1_2\times 4_1\times 3_1$ MARN,
1 bit/source/channel use for all schemes.} \label{fig-comp3}
\end{figure}
\end{document}